\newcommand\blfootnote[1]{%
	\begingroup
	\renewcommand\thefootnote{}\footnote{#1}%
	\addtocounter{footnote}{-1}%
	\endgroup
}
\newtheorem{theorem}{Theorem}[section]
\newtheorem{lemma}[theorem]{Lemma}
\newtheorem*{question*}{Question}
\newtheorem{prop}[theorem]{Proposition}
\newtheorem{corollary}[theorem]{Corollary}
\theoremstyle{definition}
\newtheorem{defn}[theorem]{Definition}
\newtheorem{remark}[theorem]{Remark}
\newtheorem*{notation*}{Notation}
\newcommand{\Var}{\mathrm{Var}}
\newcommand{\p}{{\partial}}
\newcommand{\K}{\mathbf{K}}
\newcommand{\bl}{\mathbf{L}}
\newcommand{\E}{\mathbb{E}}
\newcommand{\bp}{\mathbb{P}}
\newcommand{\fp}{\mathfrak{P}}
\newcommand{\hk}{\hat{K}}
\newcommand{\hl}{\hat{L}}
\numberwithin{equation}{section}
\title{Continuous Blackjack: Equilibrium, Deviation \& Adaptive Strategy}
\author{MU ZHAO}
\begin{document}

	\begin{abstract}
	 We introduce a variant of the classic poker game blackjack --- the continuous blackjack. We study the Nash Equilibrium as well as the cases where players deviate from it. We then pivot to the study of a large class of adaptive strategies and obtain a model free strategy. Finally, we apply reinforcement learning techniques to the game and address several associated engineering challenges.
	\end{abstract}
	\blfootnote{e-mail: \emph{muzhao.pku@gmail.com}}
	\blfootnote{Python code for Continuous Blackjack can be found \href{https://github.com/mu-zhao/Continuous-Blackjack}{here}.}
	\maketitle
	\tableofcontents
	\section{Introduction}
	 In~\cite{10.2307/27642600},
	 S. Herschkorn introduced the continuous Blackjack game --- a variation of the 
	 well-known poker game Blackjack and computed its Nash Equilibrium. In that game, 
	 each player can take one or two numbers from standard uniform distribution as opposed 
	 to the cards. Similarly, we introduce a new variant of the classic Blackjack game, with no limitation on how many hits one player can have. The following are the detailed rules:
	\begin{itemize}
		\item The game will run for many rounds, say, one billion.
		\item Each round every player's position is reshuffled randomly.
		\item From the first player, each player plays their turn in order, and other players can observe the previous players' actions.
		\item Each player can choose to hit or stay. If the player choose to hit, a random number is generated from standard uniform distribution and added to the player's total sum; otherwise the player's turn ends.
		\item In each round, players' scores are the total sums as long as they don't exceed $1$, in which case the score will be $0$. At the end of each round, the player with the highest score receive one point.
		\item  In the rare scenario where two or more players get the same highest score, they 
		 will share the point equally among them.
	\end{itemize}

\begin{question*}
What's the optimal strategy to maximize the long term payoff?
\end{question*}
This paper answer the question in the following way:
first, we find the Nash Equilibrium in chapter~\ref{sec:ne}; 
secondly, we study the case where players deviate from Nash Equilibrium in chapter~\ref{sec:deviation}. In chapter~\ref{sec:adaptive}, we pivot to the study of a large class of adaptive strategies: under weak assumptions, we obtain important results including several \textbf{upper bounds} and come up with a \textbf{model free strategy}. Finally, in chapter~\ref{sec:RL}, we model the game on \textbf{contextual bandits} and apply reinforcement learning technique to it.

\section{Nash Equilibrium}\label{sec:ne}
To compute Nash Equilibrium, we need the following lemma:
\begin{lemma}
Let $\{X_i\}$ be i.i.d uniform distributions on $[0,1]$. For any $t\in [0,1]$, let $N(t)$ be the stopping time:
\[ N(t):=\min\{ n\,|\, X_1+\cdots+X_n>t \},
\]
and 
\[ S(t):=\sum_{i=1}^{N(t)}X_i.
\]
 For $\,0\leq x \leq y \leq 1$, let $F(x,y):=\bp(S(x)\leq y)$, then we have 
\begin{equation}
 F(x,y)=(y-x)e^x.
\end{equation}
\end{lemma}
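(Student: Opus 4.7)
The plan is to derive an integral equation for $F$ by conditioning on the first draw $X_1$, then exploit the translation invariance of the recursion to reduce it to a one-variable ODE.

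First, I would split on whether $X_1$ already exceeds $x$. If $X_1 > x$, then $N(x)=1$ and $S(x)=X_1$, which contributes $\bp(x < X_1 \leq y) = y-x$ to $F(x,y)$. If $X_1 = u \leq x$, then by the strong Markov / renewal property, $S(x) - u$ is distributed as an independent copy $\tilde S(x-u)$, and the event $S(x)\leq y$ becomes $\tilde S(x-u)\leq y-u$. Integrating the density of $X_1$ over $[0,x]$ yields
\begin{equation*}
F(x,y) \;=\; (y-x) \;+\; \int_0^x F(x-u,\,y-u)\,du.
\end{equation*}

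The crucial observation is that the second argument minus the first is preserved along the recursion: $F(x-u,y-u)$ has the same gap $c:=y-x$ as $F(x,y)$. So I would fix $c\in[0,1]$, restrict attention to the diagonal $G(x) := F(x,x+c)$ for $x\in[0,1-c]$, and substitute $v=x-u$ to rewrite the integral equation as
\begin{equation*}
G(x) \;=\; c \;+\; \int_0^x G(v)\,dv.
\end{equation*}
This is a standard Volterra equation: the integral on the right is differentiable in $x$ with derivative $G(x)$, so $G$ is $C^1$, satisfies $G'(x)=G(x)$, and has initial value $G(0)=c$. Hence $G(x)=c\,e^x$, i.e.\ $F(x,y)=(y-x)e^x$.

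I expect the main subtlety to be a careful justification of the recursion in the boundary case $X_1=x$ (measure zero, so harmless) and a clean invocation of independence to identify the shifted tail $S(x)-X_1$ with a fresh copy of $\tilde S(x-X_1)$; once these are in place the algebra is routine. A sanity check that I would include: at $x=0$ we get $F(0,y) = y$, consistent with $S(0)=X_1 \sim \mathrm{Uniform}[0,1]$, and at $y=x$ we get $F(x,x)=0$, consistent with $S(x)>x$ almost surely.
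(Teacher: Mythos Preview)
Your proposal is correct and follows essentially the same route as the paper: condition on $X_1$, obtain the convolution-type integral equation $F(x,y)=(y-x)+\int_0^x F(x-u,y-u)\,du$, reduce to the diagonal $G(x)=F(x,x+c)$, and solve the resulting Volterra equation $G(x)=c+\int_0^x G(v)\,dv$ via $G'=G$, $G(0)=c$. Your write-up is in fact slightly more careful than the paper's (which contains a typo in the final line, writing $ze^z$ instead of $ze^x$), and your sanity checks at $x=0$ and $y=x$ are a nice addition.
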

\begin{proof}
We will calculate the probability $P(S(x)\leq y)$ conditioned on $X_1$.
\begin{align*}
F(x,y)& =\bp(S(x)\leq y, X_1\leq x)+P(S(x)\leq y, x<X_1\leq y)\\
 &=\int_0^x \bp(S(x)\leq y|X_1=t)dt+(y-x)\\
 &=\int_0^x \bp(S(x-t)\leq y-t) dt +(y-x)\\
 &=\int_0^x F(x-t,y-t)dt +(y-x)
\end{align*}
Let $z=y-x$, we have
\[ F(x,x+z)=z+\int_0^x F(x-t,x+z-t)dt=z+\int_0^x F(t,z+t)dt
\]
Let $g(x,z)=F(x,x+z)$, then we have
\[ g(x,z)=z+\int_0^x g(t,z)dt
\]
Notice that $g(x,0)=0$, we have $g(x,z)=ze^z$
\end{proof}	
Now we can proceed to find the Nash Equilibrium. Let's find out the best choice of the first player for the $k+1$-player game. We denote $\alpha_k$ the best choice of the first player among $k+1$ players if everyone plays optimally. Clearly, we have $\alpha_0=0$.
\subsection{Simple Case}
Now let's compute $\alpha_1$.
Suppose for a two-player game, the threshold for the first player is $A$ and payoff is $E(A)$. The second player's best response will simply be a threshold exactlly equals the score of the first player. The first player get a point if and only if the second player  go bust. If first players' score is $t$,  then the second player has probability $1-F(t,1)$ to go bust. And therefore we have:
\begin{align*}
E(A)&=\int_A^1[1-F(t,1)]\frac{\p F}{\p y}(A,t)dt\\
&=e^A\int_A^1[1-F(t,1)]dt.
\end{align*}
The derivative is
\begin{align*}
\frac{\p E}{\p A}(A,1)=e^A\bigg[\int_A^1[1-F(t,1)]dt-[1-F(A,1)]\bigg].
\end{align*}
Therefore, we have $\alpha_1$ satisfies
\[ 1-F(\alpha_1,1)=\int_{\alpha_1}^1[1-F(t,1)]dt.
\]
We can interpret the above equation as follows: \textbf{at the  moment when the score is $A$, 
the left-hand side represents the payoff if we stop there, the right-hand side represents the 
payoff if we hit exactly one more time. This is the critical threshold where  player is indifference 
to whether stopping there or  hitting one more time. It is also worth mentioning that the left-hand
 side is increasing while the right-hand side is decreasing.}
\begin{remark}
 $E(A)$ is increasing on the left of $\alpha_1$ while decreasing on the right of $\alpha_1$.
\end{remark}
\subsection{General Case}
Now let's look at the general case.
\begin{theorem}
Suppose $n+1$ player play this game optimally, the best strategy for the first player is a simple threshold $\alpha_n$ defined by
\begin{equation}\label{eq:def NE}
[1-F(\alpha_n,1)]^n=\int_{\alpha_n}^1[1-F(t,1)]^ndt
\end{equation}
\end{theorem}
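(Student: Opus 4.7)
The plan is to mimic the two-player derivation: parametrize the first player's strategy by a threshold $A$, express the expected payoff $E_n(A)$, and solve $E_n'(\alpha_n)=0$.

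The heart of the argument is the following conditional probability: given that the first player ends with score $t\in[A,1]$, the probability they win is exactly $[1-F(t,1)]^n$. The reason is that the first player wins if and only if none of players $2,\ldots,n+1$ obtains a score strictly greater than $t$. Under optimal play, as long as the running maximum among played hands is still $t$, the current player's only sensible strategy is to hit until they exceed $t$ or bust (stopping below $t$ is a sure loss); each such attempt fails with probability $1-F(t,1)$ by the lemma. If instead some player beats $t$, the first player has already lost regardless of what happens afterwards. Since the uniform draws used by different players are independent, the $n$ events ``player $i$ fails to beat $t$'' are independent, yielding the product $[1-F(t,1)]^n$.

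Combined with $\partial F/\partial y(A,t)=e^A$ on $[A,1]$, this gives
\[ E_n(A)=e^A\int_A^1[1-F(t,1)]^n\,dt, \]
so that
\[ \frac{\partial E_n}{\partial A}=e^A\Bigl(\int_A^1[1-F(t,1)]^n\,dt-[1-F(A,1)]^n\Bigr), \]
and setting this to zero produces equation~(\ref{eq:def NE}). The interpretation from the two-player remark carries over verbatim: the left side is the payoff of stopping at $\alpha_n$, while the right side is the payoff of hitting exactly one more time before stopping. Existence and uniqueness of $\alpha_n\in(0,1)$ follow from monotonicity, since $A\mapsto[1-F(A,1)]^n$ increases strictly from $0$ to $1$ while $A\mapsto\int_A^1[1-F(t,1)]^n\,dt$ decreases strictly from a positive value to $0$; consequently $E_n$ is unimodal with maximum at $\alpha_n$.

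The main obstacle, which I would handle separately, is to show that the first player's optimum is indeed of threshold type among \emph{all} strategies, not merely among thresholds. This is a standard optimal-stopping argument: by the Markov property the first player's decision depends only on their current score $s$; the stopping payoff $g(s)=[1-F(s,1)]^n$ is strictly increasing on $[0,1]$ and collapses to $0$ past busting; and the one-step-look-ahead principle applied to such a monotone payoff with an absorbing bust state forces the optimal stopping region to be an interval $[\alpha,1]$, which must equal $[\alpha_n,1]$ by the indifference computation above.
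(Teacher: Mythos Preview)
Your conditional win-probability computation has a genuine gap. You assert that once the first player stops at $t$, each later player ``hits until they exceed $t$ or bust,'' so that the bust probability is $1-F(t,1)$. The parenthetical ``stopping below $t$ is a sure loss'' only justifies that the later player's threshold is \emph{at least} $t$; it does not justify that it equals $t$. In fact, player $i$ (with $n+1-i$ opponents still to act) optimally uses the threshold $\max(\alpha_{n+1-i},t)$: having just passed $t$ she may well keep hitting to make life harder for those behind her. Her bust probability is therefore $1-F(\max(\alpha_{n+1-i},t),1)$, and the first player's win probability given score $t$ is
\[
\prod_{j=0}^{n-1}\bigl[1-F(\max(\alpha_j,t),1)\bigr],
\]
which equals $[1-F(t,1)]^n$ only when $t\ge\alpha_{n-1}$. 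Consequently your formula $E_n(A)=e^A\int_A^1[1-F(t,1)]^n\,dt$ is valid only for $A\ge\alpha_{n-1}$, and you cannot differentiate it on all of $[0,1]$ to locate the optimum.

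This is exactly the difficulty the paper's proof is designed to overcome. It proceeds by induction on $k$, writes the correct payoff $E_k(A)=e^A\int_A^1\prod_i[1-F(\max(\alpha_i,t),1)]\,dt$ using the inductively known $\alpha_1<\cdots<\alpha_{k-1}$, shows this function has a unique critical point $\alpha_k$, and then proves $\alpha_k>\alpha_{k-1}$ by contradiction. Only \emph{after} establishing this monotonicity does the product collapse to $[1-F(t,1)]^k$ on $[\alpha_k,1]$, yielding the clean defining equation. Your argument skips precisely this step: you need either the induction and the monotonicity proof, or some independent reason why the global maximizer of the \emph{true} $E_n$ must lie above $\alpha_{n-1}$. (Your final paragraph on the threshold form of the optimal rule is a nice addition that the paper leaves implicit, but it does not address this issue.)
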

\begin{proof}

We will prove by induction. Suppose we already have found  the first $k-1$ thresholds
$\alpha_1,\alpha_2,\cdots,\alpha_{k-1}$ and the payoffs $E_1(x),E_2(x),\cdots,E_{k-1}(x)$, and they satisfy the following conditions:
\begin{itemize}
	\item $\alpha_i<\alpha_j,0<i<j< k$
	\item $E_i'(x)>0,x<\alpha_i$ and $E_i'(x)<0,x>\alpha_i$ for $0<i< k$
	\item $\alpha_i$ is the root of
	\begin{equation}
	[1-F(x,1)]^i=\int_x^1[1-F(t,1)]^idt
	\end{equation}

\end{itemize}
Suppose the first player has sum $t$, and the only way the first player can get point is that everyone else go bust(since they all play optimally and will not stop before reach $t$). Therefore for everyone in this scenario, their best response is $A_i(t)=\max(\alpha_i,t)$.
Therefore, we have
\[ E_k(A)=\int_A^1\prod_{i=1}^k[1-F(A_i(t),1)]\frac{\p F}{\p y}(A,t)dt
\]
and 
\[
\frac{dE_k(A)}{dA}=e^A\Bigg[\int_A^1\prod_{i=1}^k[1-F(A_i(t),1)]dt-\prod_{i=1}^k[1-F(A_i(A),1)]\Bigg]
\]
Notice \[h(A):=\Bigg[\int_A^1\prod_{i=1}^k[1-F(A_i(t),1)]dt-\prod_{i=1}^k[1-F(A_i(A),1)]\Bigg]
\]
and $\frac{dE_k(A)}{dA}$ have the same zeros. $h(A)$ is decreasing, and 
\[h(1)=-1,\] and we have $1-F(A_i(t),1)\geq 1-F(A_i,1)$,therefore 
\[h(0)=\int_0^1\prod_{i=1}^k[1-F(A_i(t),1)]dt-\prod_{i=1}^k[1-F(A_i,1)]>0
\]
Now we conclude that there is a unique $\alpha_k$. Furthermore, $E_k'(A)>0$ for $A<\alpha_k$ and $E_k'(A)<0$ for $A>\alpha_k$. Now we only need to prove $\alpha_k>\alpha_{k-1}$.

We prove by contradiction: suppose $\alpha_{j-1}\leq \alpha_k<\alpha_j$, then we have
\[
\int_{\alpha_k}^1\prod_{i=1}^j[1-F(t,1)]\prod_{i=j}^{k-1}[1-F(\max(\alpha_i,t),1)]dt=\prod_{i=1}^j[1-F(\alpha_k,1)]\prod_{i=j}^{k-1}[1-F(\alpha_i,1)]
\]
since we have
\[ \prod_{i=j}^{k-1}[1-F(\max(\alpha_i,t),1)] > \prod_{i=j}^{k-1}[1-F(\alpha_i,1)]\,, \quad t>\alpha_k
\]
which gives rise to
\[
\int_{\alpha_k}^1[1-F(t,1)]^jdt<[1-F(\alpha_k,1)]^j
\]
Therefore,
\[
\int_{\alpha_j}^1[1-F(t,1)]^jdt< \int_{\alpha_k}^1[1-F(t,1)]^jdt<[1-F(\alpha_k,1)]^j
< [1-F(\alpha_j,1)]^j
\]
Meanwhile, by the definition of $\alpha_j$, we have
\[
\int_{\alpha_j}^1[1-F(t,1)]^jdt=[1-F(\alpha_j,1)]^j
\]
Contradiction!
\end{proof}
Here's the table for the first few terms for the Nash Equilibrium.
\begin{table}[ht]
	\caption{Nash Equilibrium $\alpha_n$} 
	\centering 
	\begin{tabular}{c c c c c c c c } 
		\hline\hline
		 n & 1&2&3&4&5&6&7  \\ [0.5ex] 
		\hline 
		
		$\alpha_n$& 0.570557& 0.687916& 0.748671& 0.787111&
		0.814059& 0.834191 & 0.849900\\ [0.7ex] 
		\hline 
		n &8& 9&10&11&12&13&14  \\ [0.5ex] 
		\hline 
		
	$\alpha_n$& 0.862558& 0.873008&
		0.881805& 0.889328 &0.895845&0.901554 &0.906602\\[0.7ex]
		\hline\hline 
	\end{tabular}
	\label{table:nonlin} 
\end{table}

\section{Away from Nash Equilibrium}\label{sec:deviation}
Now that we have computed the Nash Equilibrium, the next step is to study the case where players are not playing the
 Nash Equilibrium. In theory, players can play any strategy. For all practical purposes, we make the assumption that the player's strategy will be thresholds.
 
\subsection{Pure Strategy}
 
 Suppose the $n+1$ players are playing the strategies of 
  the simple thresholds $A,k_1,k_2,\cdots,k_n$, and let 
\begin{equation}\label{eq: def G}
G(t,k) = F(k,\max(t,k))+1-F(k,1)= \begin{cases}
1-F(k,1), \quad t<k \\
1+F(k,t)-F(k,1),\quad t\geq k
\end{cases}
\end{equation}
Then the expected payoff of the first player is 
\begin{equation}\label{eq:payoff}
E(A,k_1,\cdots,k_n)=e^A\int_A^1 H(t,k_1,\cdots,k_n)dt
\end{equation}
where 
\begin{equation}\label{eq: def H}
H(t,k_1,\cdots,k_n)=G(t,k_1)G(t,k_2)\cdots G(t,k_n)
\end{equation}
As in section~\ref{sec:ne}, the optimal strategy for the first player is to have 
threshold $A(k_1,k_2,\cdots,k_n)$ satisfy
\begin{equation}\label{eq:optimal A}
\int_A^1 H(t)dt = H(A)
\end{equation}
where we have suppressed $k_i's$ in the notations $H$ and $A$.

We are interested in the direction of changes of the optimal response $A$ corresponding 
to the shift in $k_i$, that is, we are interested in the partial 
derivatives $\frac{\p A}{\p k_i}$. However, the function $H$ is not smooth. 
To deal with this issue, we instead will investigate the left and right derivatives. Denote
\[\p_i^\pm:=\frac{\p ^\pm}{\p k_i},\quad \p_t^\pm:=\frac{\p ^\pm}{\p t},
\]

We have 
\begin{equation*}
\p_t^+G(t,k) = \begin{cases}
0, \quad t<k \\
e^k,\quad t\geq k
\end{cases}
\end{equation*}
and
\begin{equation*}
\p_k^+G(t,k) = \begin{cases}
ke^k, \quad t< k \\
-(1-t)e^k,\quad t\geq k
\end{cases}
\end{equation*}
$\p_*^-G$ coincides with $\p_*^+G$ except for $t=k$.
Take logarithm of equation~\ref{eq: def H} and we get
\[ \ln H = \sum \ln G(t,k_i).
\]
Differentiate both sides, we get
\begin{equation}
\frac{\p_t^+ H}{H}=\sum \frac{\p_t^+ G(t,k_i)}{G(t,k_i)}=\sum_{k_i\leq t} \frac{e^{k_i}}{G(t,k_i)}
\end{equation}
and 
\begin{equation}\label{eq: pH}
\frac{\p_i^+ H}{H}=\frac{\p_i^+ G(t,k_i)}{G(t,k_i)}
\end{equation}
Now we differentiate equation~\ref{eq:optimal A} and get
\begin{equation}
\int_A^1 \p_i^+ H(t) dt - H(A)\p_i^+A=\p_i^+H(A)+\p_t^+H(A)\p_i^+ A.
\end{equation}
That is equivalent to(dividing by $H(A)$)
\[
\frac{1}{H(A)}\int_A^1 \p_i^+ H(t) dt -\p_i^+A=\frac{\p_i^+H(A)}{H(A)}+\frac{\p_t^+H(A)}{H(A)}\p_i^+ A,
\]
i.e,
\begin{align}
\bigg(\frac{\p_t^+H(A)}{H(A)}+1\bigg)\p_i^+ A &=
\frac{1}{H(A)}\int_A^1 \p_i^+ H(t) dt-\frac{\p_i^+H(A)}{H(A)}\\
&=\int_A^1 \frac{H(t)}{H(A)}\frac{\p_i^+H(t)}{H(t)}dt-\frac{\p_i^+H(A)}{H(A)}. \label{eq: pos neg}
\end{align}
By equation~\ref{eq: pH}, we have
\[ \frac{\p_i^+H(t)}{H(t)}=\frac{\p_i^+ G(t,k_i)}{G(t,k_i)}=\frac{k_ie^{k_i}}{1-(1-k_i)e^{k_i}}>0,\quad t<k_i
\]
and 
\[ \frac{\p_i^+H(t)}{H(t)}=\frac{\p_i^+ G(t,k_i)}{G(t,k_i)}=\frac{-(1-t)e^{k_i}}{1-(1-t)e^{k_i}}<0,\quad t\geq k_i
\]
It is easy to check that $\frac{\p_i^+H(t)}{H(t)}$ is increasing on $[k_i,1]$, which leads to the following facts:
\begin{itemize}
\item $\frac{\p_i^+H(A)}{H(A)}$ is maximum if $A<k_i,$
\item $\frac{\p_i^+H(A)}{H(A)}$ is minimum if $A\geq k_i$.
\end{itemize}
Meanwhile, we notice that
\[ \int_A^1 \frac{H(t)}{H(A)}dt=1
\]
from equation~\ref{eq:optimal A}, with $\frac{H(t)}{H(A)}>0$.
Hence for equation~\ref{eq: pos neg}, we have
\begin{equation*}
\int_A^1 \frac{H(t)}{H(A)}\frac{\p_i^+H(t)}{H(t)}dt-\frac{\p_i^+H(A)}{H(A)}=\begin{cases}
<0,\, A<k_i\\
>0,\, A\geq k_i
\end{cases}
\end{equation*}
On the other hand, 
\begin{equation}\label{eq: coef}
\frac{\p_t^+H(A)}{H(A)}+1=\sum_{k_i\leq A} \frac{e^{k_i}}{G(A,k_i)}+1>0.
\end{equation}
We have the following conclusion:
\begin{prop}\label{prop:derivative}
 The sign of $\p_i^\pm A$ is determined by the following equations:
\begin{align}\label{eq: p A sign}
\textbf{sign}(\p_i^+ A)&=h(A-k_i)\\
\textbf{sign}
(\p_i^- A)&=-h(k_i-A)
\end{align}
where $h(x)$ is defined by
\begin{equation*}
h(x)=\begin{cases}
1,\,x\geq 0\\
-1,\,x<0
\end{cases}
\end{equation*}
\end{prop}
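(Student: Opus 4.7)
The plan is to read off the sign of $\p_i^+ A$ directly from the identity~\eqref{eq: pos neg}, which has already been derived in the excerpt. By equation~\eqref{eq: coef}, the coefficient of $\p_i^+ A$ on the left-hand side is strictly positive, so $\textbf{sign}(\p_i^+ A)$ coincides with the sign of the right-hand side. Using the optimality condition~\eqref{eq:optimal A}, $\tfrac{H(t)}{H(A)}\,dt$ is a probability measure on $[A,1]$, and so the right-hand side of~\eqref{eq: pos neg} is the expected value of $\phi(t) := \p_i^+ H(t)/H(t)$ against that measure, minus the value $\phi(A)$.

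I would then split into the two cases already flagged in the bullet list preceding the statement. For $A < k_i$, $\phi$ equals the positive constant $c := \tfrac{k_i e^{k_i}}{1-(1-k_i)e^{k_i}}$ on $[A,k_i)$, is strictly below $c$ on the positive-measure set $[k_i,1]$, and $\phi(A)=c$; hence the expectation minus $\phi(A)$ is strictly negative, giving $\p_i^+ A < 0$, which matches $h(A-k_i) = -1$. For $A \geq k_i$, $\phi$ is strictly increasing on $[A,1]$ by the computation preceding the statement, so $\phi(A)$ is the infimum of $\phi$ on $[A,1]$; the expectation strictly exceeds $\phi(A)$ and $\p_i^+ A > 0 = h(A-k_i)$.

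For the left derivative, I would rerun the differentiation of~\eqref{eq:optimal A} with every $\p^+$ replaced by $\p^-$. The resulting coefficient remains strictly positive by the same reasoning as in~\eqref{eq: coef}. Because $\p_k^- G$ and $\p_k^+ G$ agree off the null set $\{t=k_i\}$, the integral term is unchanged, and the boundary term $\p_i^\pm H(A)/H(A)$ can differ only when $A=k_i$. Thus for $A \neq k_i$ the right-derivative analysis transplants verbatim, and the two conventions $h(A-k_i)$ and $-h(k_i-A)$ give the same sign. The only remaining case is $A = k_i$: there $\phi^-(A) = c > 0$ while $\phi(t) \leq 0$ throughout $[k_i,1]$, so the expectation minus $\phi^-(A)$ is strictly negative and $\p_i^- A < 0$, matching $-h(0) = -1$.

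The one genuinely delicate point is the boundary case $A = k_i$, where the left and right derivatives of $A$ actually split in sign. This is the reason the two formulas in the proposition are not symmetric, and it hinges on the convention $h(0) = 1$, which in turn corresponds to the author's choice that $\p_k^+ G$ uses the $t \geq k$ branch at $t = k$. Everything else is bookkeeping on top of the identities already assembled just before the statement.
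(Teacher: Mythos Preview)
Your proposal is correct and follows exactly the paper's approach: use the positivity of the coefficient in~\eqref{eq: coef} to reduce to the sign of the right-hand side of~\eqref{eq: pos neg}, then interpret $\tfrac{H(t)}{H(A)}\,dt$ as a probability measure via~\eqref{eq:optimal A} and compare the resulting average of $\phi$ against the boundary value $\phi(A)$ using the monotonicity facts listed just before the proposition. Your handling of the left derivative and of the splitting case $A=k_i$ is more explicit than the paper's, but the argument is the same (minor slip: in the $A\geq k_i$ case you wrote ``$\p_i^+ A > 0 = h(A-k_i)$'' where $h(A-k_i)=1$, not $0$).
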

Proposition~\ref{prop:derivative} has a surprising implication:
 the only critical point is \[A=k_i,\,1\leq i\leq n,\]
which leads to the following upper bounds:
\begin{corollary}
The maximum of $A(k_1,\cdots,k_n)$ is determined by \textbf{simple threshold upper bound} $\beta_n$, which is defined by 
\begin{equation}
\int_{\beta_n}^1 H(t,\beta_n,\cdots,\beta_n)dt=H(\beta_n,\beta_n,\cdots,\beta_n)
\end{equation}
That is,
\begin{equation}\label{eq: max A}
1-[1-(1-\beta_n)e^{\beta_n}]^{n+1}=(n+1)e^{\beta_n}[1-(1-\beta_n)e^{\beta_n}]^n
\end{equation}
\end{corollary}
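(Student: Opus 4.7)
The plan is to exploit Proposition~\ref{prop:derivative} coordinate-wise, locate the global maximum of $A(k_1,\ldots,k_n)$ at a symmetric point where all the $k_i$ coincide, and then reduce the optimality equation~\ref{eq:optimal A} to an explicit identity in $\beta_n$.

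First, with all $k_j$ ($j\neq i$) fixed, I would view $A$ as a function of a single variable, $\tilde A(k_i)$. Proposition~\ref{prop:derivative} tells us that the one-sided derivatives of $\tilde A$ are strictly positive at every point where $k_i<\tilde A(k_i)$ and strictly negative at every point where $k_i>\tilde A(k_i)$; in particular $\tilde A$ is strictly monotone on each of these open sets. At the endpoints $k_i=0$ or $k_i=1$ player $i$ effectively drops out (never hits, or hits and busts almost surely), so $\tilde A$ reduces to the first player's optimal response against the remaining $n-1$ opponents, which lies strictly in $(0,1)$. Hence $k_i\mapsto \tilde A(k_i)-k_i$ must change sign in $(0,1)$, and by the monotonicity picture the crossing is unique and gives the maximum of $\tilde A$. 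Thus the best choice of $k_i$ satisfies $k_i=\tilde A(k_i)$; since this must hold simultaneously for every $i$ at the global maximum and the right-hand side is independent of $i$, all $k_i$ collapse to a common value $\beta_n$ with $A(\beta_n,\ldots,\beta_n)=\beta_n$.

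The second stage is purely computational. Substituting $k_1=\cdots=k_n=A=\beta_n$ into~\ref{eq:optimal A}, and using that for $t\geq\beta_n$ the definition of $G$ gives $G(t,\beta_n)=1-(1-t)e^{\beta_n}$, the right-hand side becomes $H(\beta_n,\ldots,\beta_n)=[1-(1-\beta_n)e^{\beta_n}]^n$. The substitution $u=(1-t)e^{\beta_n}$ reduces $\int_{\beta_n}^1[1-(1-t)e^{\beta_n}]^n\,dt$ to $\frac{1}{(n+1)e^{\beta_n}}\bigl\{1-[1-(1-\beta_n)e^{\beta_n}]^{n+1}\bigr\}$, and equating the two sides and clearing the common factor produces exactly equation~\ref{eq: max A}.

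The hardest part is the first stage. Proposition~\ref{prop:derivative} only records the signs of one-sided derivatives, and at the transition point $k_i=\tilde A(k_i)$ those signs jump in a way that, read literally, could wrongly suggest a local minimum. The correct move is to avoid evaluating the derivative at the kink itself and instead rely on the global structure: strict monotonicity on the two open sets $\{k_i<\tilde A(k_i)\}$ and $\{k_i>\tilde A(k_i)\}$, combined with the boundary values, forces the unique crossing to be the maximum. Uniqueness of the crossing itself needs a short argument (if $\tilde A-k_i$ crossed back up to positive, the proposition would make $\tilde A$ increasing on a region where $k_i>\tilde A(k_i)$, a contradiction). Once that monotonicity picture is in place, the remaining computation is routine.
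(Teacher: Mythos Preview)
Your approach matches the paper's exactly: use Proposition~\ref{prop:derivative} to pin the only critical point at $k_1=\cdots=k_n=A$, then evaluate $\int_{\beta_n}^1[1-(1-t)e^{\beta_n}]^n\,dt$---the paper's one-line proof does just the latter, taking the critical-point claim as already established in the text preceding the corollary, while you supply the extra care about the kink and the boundary that the paper omits. One small slip: at $k_i=0$ player $i$ does \emph{not} drop out (threshold $0$ forces exactly one hit, giving a uniform score, so $G(t,0)=t$), but this is harmless since the only fact you need at that endpoint is $\tilde A(0)>0$, which is immediate.
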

\begin{proof}
Notice that $H(t,A,\cdots,A)=[1-(1-t)e^A]^n$ for $t\geq A$, the integral yields equation~\ref{eq: max A}.
\end{proof}
 We have the first few \emph{simple threshold upper bounds} $\beta_n$ as follows:
\begin{table}[ht]
	\caption{\emph{Simple Threshold Upper Bounds} $\beta_n$} 
	\centering 
	\begin{tabular}{c c c c c c c c } 
		\hline\hline
		n & 1&2&3&4&5&6& 7  \\ [0.5ex] 
		\hline 
		
		$\beta_n$ &0.588650& 0.698942& 0.756234& 0.792694& 0.818387& 0.837665 & 0.852764 \\ [0.7ex] 
		\hline 
		n &8&9&10&11&12&13&14  \\ [0.5ex] 
		\hline 
		
		$\beta_n$& 0.864966& 0.875068& 0.883591& 0.890894&0.897231& 0.902791& 0.907714\\[0.7ex]
		\hline\hline 
	\end{tabular}
	\label{table:max threshold} 
\end{table}

\subsection{Mixed Strategy}

Now we can consider the more general case: if other players randomize their strategies, that is,  $k_i$ is no longer a function but a distribution.
Suppose the other players use mixed strategies from a strategy set $\Omega\subset [0,1]^n$ of pure strategies, and let's denote $K:=(k_1,k_2,\cdots,k_n) \in [0,1]$ to be a pure strategy in $\Omega$ , and $\mu(K)$ its distribution function. We have the payoff
\begin{equation}
E(A,\Omega,\mu)= \int_A^1 \int_\Omega H(t,K)e^A d\mu dt=\int_\Omega \int_A^1 H(t,K)e^Adt d\mu
\end{equation}
Therefore, the best response $A(\omega,\mu)$ is determined by
\begin{equation}
\int_\Omega \Big[ \int_A^1 H(t,K)dt- H(A,K)\Big]d\mu=0
\end{equation}
or
\begin{equation}\label{eq: general op}
\int_A^1\int_\Omega H(t,K)d\mu dt =\int_\Omega H(A,k)d\mu
\end{equation}
\begin{remark}
The above equation~\ref{eq: general op} has one unique solution. The best response is always \textbf{pure}!
\end{remark}
\begin{remark}
	We should mention that the strategies considered here is not adaptive. That is, previous results of the same round is not taken into consideration.
\end{remark}
\begin{theorem}\label{thm:main}
Suppose for any pure strategy $K\in \Omega$, we have the best response $A(K) \in [\alpha,\beta]$. Then for any mixed strategy $(\Omega,\mu)$, the best response $A(\Omega,\mu) \in  [\alpha,\beta]$.
\end{theorem}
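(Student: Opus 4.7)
The plan is to reduce the theorem to a one-dimensional monotonicity argument applied to the defining equation of the best response. Introduce
$$\Phi(A, K) := \int_A^1 H(t, K)\, dt - H(A, K),$$
so that the pure-strategy best response $A(K)$ is characterized by $\Phi(A(K), K) = 0$, while by equation~\ref{eq: general op} the mixed-strategy best response $A^* := A(\Omega, \mu)$ is characterized by $\int_\Omega \Phi(A^*, K)\, d\mu = 0$. The entire argument will hinge on the observation that for each fixed $K$ the map $A \mapsto \Phi(A, K)$ is strictly decreasing, which follows from the right-derivative computation
$$\frac{\p^+ \Phi}{\p A}(A, K) = -H(A, K)\Bigl(1 + \sum_{k_i \leq A} \frac{e^{k_i}}{G(A, k_i)}\Bigr) < 0,$$
together with equation~\ref{eq: coef}.

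Given this monotonicity, the upper bound follows immediately. The hypothesis $A(K) \leq \beta$ combined with $\Phi(A(K), K) = 0$ forces $\Phi(\beta, K) \leq 0$ pointwise in $K$, hence
$$\int_\Omega \Phi(\beta, K)\, d\mu \;\leq\; 0 \;=\; \int_\Omega \Phi(A^*, K)\, d\mu.$$
Since $\Psi(A) := \int_\Omega \Phi(A, K)\, d\mu$ is a probability mixture of strictly decreasing continuous functions of $A$, it is itself strictly decreasing and continuous, so the display above forces $A^* \leq \beta$. The lower bound is symmetric: $A(K) \geq \alpha$ yields $\Phi(\alpha, K) \geq 0$ for every $K$, hence $\Psi(\alpha) \geq 0 = \Psi(A^*)$, and monotonicity of $\Psi$ gives $A^* \geq \alpha$.

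I do not anticipate a serious obstacle. The one step that merits care is the strict monotonicity of $\Psi$, but this is immediate from integrating the pointwise strict inequality $\Phi(A_1, K) > \Phi(A_2, K)$ (for $A_1 < A_2$) against the probability measure $\mu$. The existence and uniqueness of $A^*$ itself then reappear automatically as a byproduct: continuity and strict monotonicity of $\Psi$ together with the sandwich $\Psi(\alpha) \geq 0 \geq \Psi(\beta)$ pin down a unique root in $[\alpha, \beta]$, which recovers the remark following equation~\ref{eq: general op} in passing.
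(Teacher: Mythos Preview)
Your proof is correct and follows essentially the same route as the paper: both exploit the strict monotonicity of $A \mapsto \int_A^1 H(t,K)\,dt - H(A,K)$ to sandwich the mixed best response between $\alpha$ and $\beta$ via sign considerations on the integrated quantity. Your version is, if anything, slightly more explicit in justifying that monotonicity via equation~\ref{eq: coef} and in recovering uniqueness of $A^*$ as a byproduct.
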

\begin{proof}
The proof is rather simple. Notice $\int_A^1 H(t)dt - H(A)$ is decreasing as a function of $A$. For any $A<\alpha$, we have:
\[ \int_A^1 H(t,K)dt- H(A,K)<\int_\alpha^1 H(t,K)dt- H(\alpha,K)\leq 0
\]
and 
\[\int_\Omega \Big[ \int_A^1 H(t,K)dt- H(A,K)\Big]d\mu<0
\]
Therefore  $A(\Omega,\mu)\geq \alpha$. We can prove the right side likewise.
\end{proof}
\begin{remark}
Theorem~\ref{thm:main} shows that the best response of the mixed strategy lies in 
the convex hull of the best response of the pure strategies. And strategies outside
 the convex hull are strictly dominated.
\end{remark}
\begin{corollary}
First player's best response has upper bound defined by equation~\ref{eq: max A}, 
however other players mix their strategies.
\end{corollary}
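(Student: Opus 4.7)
The plan is to chain together two results already established in this section, so the argument is essentially a one-line application rather than a fresh computation. The first ingredient is the preceding corollary on \emph{simple threshold upper bounds}: for every pure profile $K=(k_1,\dots,k_n)\in [0,1]^n$, the first player's best response $A(K)$ satisfies $A(K)\le \beta_n$, where $\beta_n$ is the solution of equation~\ref{eq: max A}. (This in turn rests on Proposition~\ref{prop:derivative}, which forces any critical point of $A(k_1,\ldots,k_n)$ to occur on the diagonal $k_1=\cdots=k_n$, together with continuity on the compact cube $[0,1]^n$.) The second ingredient is Theorem~\ref{thm:main}, which asserts that if $A(K)\in[\alpha,\beta]$ for every pure $K$ in the support, then the mixed best response $A(\Omega,\mu)$ also lies in $[\alpha,\beta]$.

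The proof is therefore organized as follows. First, I would take an arbitrary mixed strategy $(\Omega,\mu)$ with $\Omega\subset[0,1]^n$. By the preceding corollary, every $K\in\Omega$ satisfies $A(K)\in[0,\beta_n]$. Applying Theorem~\ref{thm:main} with $\alpha=0$ and $\beta=\beta_n$ yields $A(\Omega,\mu)\in[0,\beta_n]$, and in particular $A(\Omega,\mu)\le \beta_n$, which is exactly the claim.

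There is no real obstacle here; the content of the corollary is that the upper bound $\beta_n$, already established in the pure-strategy setting, is robust under randomization of opponents. The only subtlety worth flagging is that Theorem~\ref{thm:main} is phrased for a fixed interval $[\alpha,\beta]$ containing all pure best responses, so one should verify that the uniform bound $A(K)\le \beta_n$ from the preceding corollary indeed holds over the full cube $[0,1]^n$ (not just at interior critical points); this is covered by the boundary analysis implicit in Proposition~\ref{prop:derivative} combined with compactness. Once that is noted, the corollary follows immediately.
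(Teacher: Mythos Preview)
Your proposal is correct and matches the paper's intended argument: the corollary is stated without proof precisely because it is the immediate combination of the preceding \emph{simple threshold upper bound} corollary (giving $A(K)\le\beta_n$ for every pure $K$) with Theorem~\ref{thm:main} (propagating any uniform interval bound from pure to mixed strategies). Your remark about needing the bound on all of $[0,1]^n$, including the boundary, is a fair caveat that the paper glosses over, but it does not affect the logic.
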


\section{Adaptive Strategy} \label{sec:adaptive}
Now we are going to consider strategies that take other players' actions into account.  Suppose $n+1$ players are playing the game, and the $i$-th player's strategy is a simple threshold determined by the maximum of all the valid scores of the previous players. Formally, let $\{X_i^j\}$ be independent uniform distributions, we define
\begin{align}
N_0&:= \min\{ m \,| \, X_1^0+X_2^0+\cdots+X_m^0+\cdots>A \},\\
N_i&:=  \min\{ m \,| \, X_1^i+X_2^0+\cdots+X_m^i+\cdots>K_i(T_i))\},\quad i\geq 1,\\
S_i(K_i)&: =\sum_{j=1}^{N_i}X_j^i
\end{align}
where $K_i$ is the function defining the strategy while $T_i$ is defined as follows:
\begin{equation*}
W_i(K_i):=\begin{cases}
S_i,\quad S_i\leq 1\\
0, \quad  S_i>1
\end{cases}
\end{equation*}
and $T_i:=\max_{0\leq j < i}\{ W_j\}$.
$N_i$  is the stopping time for player $i$ , and $S_i$ is the cumulative sum, keep in mind that it goes bust once the sum exceeds $1$, and $W_i$ is the score. 
 $T_i$ is the maximum of all the scores before player $i$. The threshold is $K_i(T_i)$ is a random variable. By abuse of notation, we will denote the cumulative distribution function by $K_i(t,s):=\bp(K_i(t)\leq s)$. 

\subsection{General Model}
Let's denote $\K:=(K_1,\cdots,K_n)$, under the assumptions above, the payoff of the first player is
\begin{align}
E(A,\K)&=\int_A^1\int_0^1\cdots\int_0^1 H(t,K)dK_n(t)\cdots dK_1(t)e^Adt\\
&=e^A\int_A^1 \Big(\prod_{i=1}^n \int_0^1 G(t,s)dK_i(t)(s) \Big)dt
\end{align}
Let's define 
\begin{equation}
L(t,K):=\int_0^1 G(t,s)dK(t)(s)=1-\int_0^1 K(t,s)dG(t)(s)
\end{equation}
$L(t,K)$ has a meaningful interpretation: $L$ is the probability to score less that $t$ when playing strategy $K$. More importantly, $L$ is \textbf{observable}.
Let's denote $\bl(\K):=(L_1(K_1),\cdots,L_n(K_n))$, we define
\begin{equation}
 M(t,\bl):=\prod_{i=1}^n L(t).
\end{equation}
Likewise, $M(t,\bl(\K))$ is the probability to score under $t$ for each player if their strategies are determined by $\K$. That is,
\begin{equation}
L(t,K)=\bp(W(K)\leq t),\quad M(t,\bl(\K))=\prod_{i=1}^n \bp(W(K_i)\leq t)
\end{equation}  We can rewrite the expected payoff as 
\begin{equation}\label{eq: general payoff}
E(A,\K)=e^A\int_A^1 M(t,\K)dt
\end{equation}
and any optimal threshold should satisfy
\begin{equation}\label{eq: general NE}
\int_A^1 M(t,\K)dt=M(A,\K)
\end{equation}

Meanwhile, we have
\begin{align*}
|M(t,\K)-M(t,\tilde{\K})|&=\Big|\prod_{i=1}^nL(t,K_i)-\prod_{i=1}^nL(t,\tilde{K_i})\Big|\\
&\leq \sum_{i=1}^n|L(t,K_i)-L(t,\tilde{K_i})|
\end{align*}
Let's define the distance between two strategies as follows:
\begin{equation}
||\bl-\tilde{\bl}||_{L_1}:=\sum_{i=1}^n||L_i-\tilde{L_i}||_{L_1}
\end{equation}
\begin{equation}
d(\K,\tilde{\K}):=||\bl(\K)-\bl(\tilde{\K})||_{L_1}
\end{equation}
then we have
\begin{equation}
|E(A,\bl)-E(A,\tilde{\bl})|\leq e^A(1-A)||\bl-\tilde{\bl}||_{L_1}\leq ||\bl-\tilde{\bl}||_{L_1}= d(\K,\tilde{\K})
\end{equation}
This ensures us that \textbf{ if our strategies are close enough (under $L_1-$norm), then the expected payoff won't be far off.}

We can maximize our payoff as long as we know $L_i$. However, that's generally not the case. Therefore, we have to estimate $L_i$, which is possible if the game is played repetitively for lots of rounds and the other players' strategies are stationary.  
Here's what we are going to do: build for every player a profile. Suppose other players' strategies are determined by
their own position and the maximum of the previous valid scores only, that is, by the tuple of distributions $(L^{(0)},L^{(1)},\cdots,L^{(n)})$. A \emph{profile} $\fp$ for player is an estimate of these distributions:
\begin{equation}
\mathfrak{P}(\K):=(\hat{L}^{(0)},\hat{L}^{(1)},\hat{L}^{(2)},\cdots,\hl^{(n)})
\end{equation}
Where $\hl^{(i)}$ is the estimate of $L^{(i)}$. 

How do we estimate $L$ then? We define
\begin{equation}
\xi(t,K):=\E I_{\{W(K)<t\}},
\end{equation}
 we then have
\begin{equation}
\E\xi=L(t,K),\quad \Var(\xi)=L(t,K)(1-L(t,K))\leq 1/4
\end{equation}
Therefore, let 
\[\eta(t,K):=\frac{\xi_1(t,K)+\cdots+\xi_N(t,K)}{N},
\]
we have \[
\E\eta=L(t,K),\quad \Var(\eta)\leq \frac{1}{4N}
\]

Let $\hl(t,x,K):=\bp(\eta(t,K)\leq x)$, we have 
\begin{equation}\label{ineq:L1}
\E|\bl(\K)-\fp(\K)|^2\leq \frac{n}{4N}
\end{equation}
 and by Chebyshev's inequality,
\begin{equation}\label{ineq:chebe}
 \bp(|\bl(\K,t)-\fp(\K,t)|>\epsilon)\leq\frac{n}{4N\epsilon^2}
 \end{equation}
 \begin{remark}
 $\fp$ converges to $\bl$ relatively slow (square root). However, the probability of large error diminishes linearly.
 \end{remark}

We notice that $\fp[j]$ is a function. In general, it's not feasible to record all the values. Therefore, we have to discretize them. In order to have control over $t$,  let's assume  $L$ to be Lipschitz continuous.

Suppose we record $\fp$ with step size $1/m$, and let $\fp^\Delta$ be the discretization of $\fp$, that is,
$\Delta:=(\Delta_1,\cdots,\Delta_n)$, where $\Delta_i:=[x_{i-1},x_i),\,x_i=i/m$,
\[
\hl^{\Delta_i}(t):=m\int_{\Delta_i}\hl(t)dt,\,\forall t\in \Delta_i.
\]
\[ \fp^\Delta:=(\hl^\Delta_1,\cdots,\hl_n^\Delta).
\]
We have 
\begin{align*}
	\Big|\int_\Delta \big[\hl_1(t)\cdots\hl_n(t)-\hl_1^\Delta\cdots\hl_n^\Delta\big]dt\Big|
	\leq \sum_{i=1}^n \int_\Delta |\hl_i(t)-\hl_i^\Delta(t)|dt
	\leq n\cdot\frac{c}{4m^2}
\end{align*}
The first inequality holds because of the following fact: for real numbers $a_i,b_i$ such that $|a_i|\leq 1,|b_i|\leq 1$, we have
 \begin{align*}
 |a_1a_2\cdots a_n-b_1\cdots b_n|&\leq |a_1a_2\cdots a_n-b_1 a_2\cdots a_n|+|b_1 a_2\cdots a_n-b_1 b_2 a_3\cdots a_n|\\
& +
 \cdots +|b_1b_2\cdots a_n- b_1\cdots b_n|\\
 &\leq |a_1-b_1|+|a_2-b_2|+\cdots+|a_n-b_n|.
 \end{align*}
 The second inequality is a corollary of the following lemma:

\begin{lemma}\label{lemma:good one}
	Suppose $f(t)$ is a Lipschitz function on $[0,1]$ such that  $|f(x)-f(y)|\leq |x-y|$ and
	$\int_0^1 f(t)dt=0$, then for any $p>0$,\[\int_0^1|f(t)|^pdt\leq \frac{1}{(p+1)2^p}.
	\]
\end{lemma}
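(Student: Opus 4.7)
\emph{Proof proposal.} The plan is to reduce to the monotone case via rearrangement, establish the distributional inequality $|\{|f|>s\}|\le(1-2s)_+$, and then conclude by layer cake.

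First I would pass to the non-decreasing rearrangement $f^\dagger$ on $[0,1]$: by the one-dimensional P\'olya--Szeg\H{o} inequality $f^\dagger$ is still $1$-Lipschitz, and since $f^\dagger$ has the same distribution as $f$, one has $\int_0^1 f^\dagger\,dt = 0$ and $\int_0^1|f^\dagger|^p\,dt = \int_0^1|f|^p\,dt$. So WLOG $f$ is non-decreasing. By continuity and mean zero, $f$ has a zero $c\in[0,1]$; the symmetry $f(t)\mapsto -f(1-t)$ lets us take $c\le 1/2$. Granted the central claim $|\{|f|>s\}|\le(1-2s)_+$, the layer-cake formula and a direct calculation give
\[ \int_0^1|f|^p\,dt \;=\; p\int_0^\infty s^{p-1}|\{|f|>s\}|\,ds \;\le\; p\int_0^{1/2}s^{p-1}(1-2s)\,ds \;=\; \frac{1}{(p+1)2^p}. \]

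To establish the distributional inequality, I would split at $s=c$. For $s\le c$ the Lipschitz bound $|f(t)|\le|t-c|$ forces $\{f>s\}\subset(c+s,1]$ and $\{f<-s\}\subset[0,c-s)$, giving $|\{|f|>s\}|\le(1-c-s)+(c-s)=1-2s$. For $s>c$ it suffices to show $\|f\|_\infty\le c$; this is where the mean-zero hypothesis enters. Writing $M:=f(1)$, the Lipschitz bound from $f(c)=0$ gives $|f(0)|\le c$ and $M\le 1-c$ (so $1-M\ge c$). Combining $|f|\le c-t$ on $[0,c]$ with the identity $\int_c^1 f = \int_0^c|f|$ gives $\int_c^1 f\le c^2/2$. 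On the other hand, Lipschitz from the right endpoint yields $f(t)\ge M-(1-t)$, which integrates to $\int_c^1 f\ge\int_{1-M}^1(M+t-1)\,dt = M^2/2$. Thus $M\le c$, whence $\|f\|_\infty\le c$ and $|\{|f|>s\}|=0$ for $s>c$.

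The main subtlety will be the $L^\infty$ bound $\|f\|_\infty\le c$: this is the one step where the mean-zero hypothesis must be used in a non-pointwise way, combining the Lipschitz constraints at both the zero $c$ and the right endpoint with the integral bound $\int_0^c|f|\le c^2/2$. The remainder is standard rearrangement and layer-cake manipulation.
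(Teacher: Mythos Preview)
Your proof is correct and takes a genuinely different route from the paper's.

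The paper never rearranges: it works directly with the open sets $I^{\pm}=\{\pm f>0\}$, writes each as a countable union of intervals, and uses that $f$ vanishes at an endpoint of every such interval (plus the Lipschitz bound) to obtain
\[
\int_{I^{+}} f_{+}^{p}\,dt \;\le\; \frac{1}{p+1}\sum_k (y_k-x_k)^{p+1}\;\le\;\frac{|I^{+}|^{p+1}}{p+1},
\]
and similarly for $I^{-}$; it then combines these two bounds with $|I^{+}|+|I^{-}|\le 1$ and the mean-zero relation $\int_{I^{+}}f_{+}=\int_{I^{-}}f_{-}$ to reach $\frac{1}{(p+1)2^{p}}$. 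The whole argument is elementary---no rearrangement, no layer cake---and the per-interval estimate is the only real idea.

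Your approach trades that elementary decomposition for a single invocation of P\'olya--Szeg\H{o} (monotone rearrangement preserves the Lipschitz constant), after which everything collapses to a one-zero picture. What you gain is the sharp distributional inequality $|\{|f|>s\}|\le(1-2s)_{+}$, which is strictly stronger than anything in the paper's proof and makes the final layer-cake step mechanical for every $p$ simultaneously. The $L^{\infty}$ bound $\|f\|_{\infty}\le c$ that you extract from the mean-zero condition in the monotone case is a clean observation with no counterpart in the paper's argument. The cost is that you must cite (or prove) that the monotone rearrangement on $[0,1]$ does not increase the Lipschitz constant---standard, but not as self-contained as the paper's bare-hands interval decomposition.
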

\begin{proof}
	First, let's define $I^+:=\{x\,|\,f(x)> 0\},\,I^-:=\{x\,|\,f(x)<0\},$ both of them are open sets since $f(t)$ is continuous. $f_+:=\max(f,0),\,f_-:=\max(-f,0)$, we have
	\[ \int_{I^+}f_+(t)dt=\int_{I^-}f_-(t)dt,\quad
	\int_0^1|f(t)|^pdt=\int_{I+}f_+^p(t)dt+\int_{I^-}f_-^p(t)dt.
	\]
	Now we prove the following inequality: \[\int_{I^+}f_+^p(t)dt\leq \frac{1}{p+1}|I^+|^{p+1}.
	\]
	Indeed, if $I^+$ a union of finite disjoint open intervals $\bigcup [x_k,y_k]$, then
	$f(x_k)=0$ or $f(y_k)=0$. (Because $\int_0^1 f(t)dt=0$.) Without loss of generality, let's assume $f(x_k)=0$, then for any $t\in [x_k,y_k]$,
	\[ f(t)\leq t-x_k, 
	\]
	and therefore \[\int_{x_k}^{y_k}f_+^p(t)dt\leq \frac{(y_k-x_k)^{p+1}}{p+1}.\]
	We sum over the union, and get 
	\[ \int_{I^+}f_+^p(t)dt\leq \frac{1}{p+1}\sum (y_k-x_k)^{p+1}\leq \frac{1}{p+1} |I^+|^{p+1}
	\]
	Since $I^+$ is open, it can be approximated by finite union of disjoint open sets. Therefore, we have the above inequality holds, which gives rise to
	\[ \int_0^1|f(t)|^pdt\leq 2\cdot\min\Big( \int_{I^+}f_+^p(t)dt,\int_{I^-}f_-^p(t)dt\Big)
	\leq \frac{2}{p+1}\min\Big( |I^+|^{p+1},|I^-|^{p+1} \Big)\leq \frac{1}{(p+1)\cdot2^p}.
	\]
\end{proof}

Therefore,

\begin{equation}
	|E(A,\fp)-E(A,\fp^\Delta)|\leq m\cdot \frac{nc}{4m^2}=\frac{nc}{4m}
\end{equation}

Let \[A^*:= \arg\max\limits_{A} E(A,\fp),\quad A^\Delta:= \arg\max\limits_{A} E(A,\fp^\Delta)\] we have
\begin{align*}
	E(A^*,\fp)&\leq E(A^*,\fp^\Delta)+\frac{nc}{4m}\\
	&\leq E(A^\Delta,\fp^\Delta)+\frac{nc}{4m}\\
	&\leq E(A^\Delta,\fp)+\frac{nc}{4m}+\frac{nc}{4m}\\
	&\leq E(A^\Delta,\fp)+\frac{nc}{2m}
\end{align*}
 We can rewrite the inequality as:
\begin{equation}
	E(A^\Delta,\fp)\geq \max(E(A,\fp))-\frac{nc}{2m}
\end{equation}

\begin{remark}
The accuracy of discretization is $O(1/m)$.
\end{remark}
\subsubsection{\textbf{Nonstationary Scenario}}
	Even though we made the assumption that other players' strategies are stationary for the estimate of $\hl$, it is, however, possible to tune our strategy to adapt to nonstationary scenarios. In the stationary case, suppose the $n-th$ step reward is $R$,
	we had \[
	\hl^{(n)}=\hl^{(n-1)}+\frac{1}{n}\cdot\big(R-\hl^{(n-1)}\big).
	\]
	Now we rewrite the update in a more general form:
	\begin{equation}
		\hl^{(n)}=\hl^{(n-1)}+\alpha(n)\cdot\big(R-\hl^{(n-1)}\big).
	\end{equation} 
	Each time the weight for update would be $\alpha(n)$. $\alpha(n)=1/n$ corresponds to averaging the cumulative sum. We can take
	\[\alpha(n)=\frac{1}{\sum_{i=0}^{n-1} a^i},\, 0<a<1,
	\]
	for \emph{exponential weighted average}, which gives more weight in recent observations.
	A well known result to ensure the estimate to converge almost surely is ( \cite{Sutton1998} p. 33)
	\begin{equation}
		\sum_{i=1}^\infty \alpha(n)=\infty,\quad \sum_{i=1}^\infty \alpha^2(n)<\infty.
	\end{equation}
	\textbf{The first condition ensures the process to overcome the initial conditions or random fluctuations, the second condition assures convergence.}

The detailed implementation will be carried out in the Appendix.

\subsection{Rationality}
 Now instead of all possible strategies a player can play, let's confine ourselves to a specific class of strategies. 
 \begin{defn}[Rational Strategy]
 A \emph{rational strategy} is a function 
 \[ K: [0,1]\times \mathbb{R}\rightarrow [0,1]
 \]
 such that
 \begin{itemize}
 	\item $K$ is non-decreasing in $s$  $K(t,1)=1,\, K(t,s)=0,\, \forall t\in [0,1], \forall s<0$.
 	\item $K(t,s)=0$ for $s<t$. 
 \end{itemize}
 \end{defn}
\textbf{The first condition says that $K(t,\cdot)$ is a cumulative distribution function;
 the second condition says that thresholds less than the maximum of the previous valid scores $t$ will not be used.}
 \begin{remark}
 Mixed strategy of \emph{rational}  strategies is still \emph{rational}. We make the convention that the strategy for the first player is always \emph{rational}.
 \end{remark}
\begin{defn}[Rational Game]
We call a player is \emph{in a rational game}, if the player play first and the other players' strategy $\K=(K_1,K_2,\cdots,K_n)$ satisfies
\begin{itemize}
	\item $K_i$ is \emph{rational}, for $1\leq i\leq n$.
	\item $K_n=I_{\{s\geq t\}}$, where $I$ is indicator function.
\end{itemize}
\end{defn}
\begin{remark}
The second condition says that the last player will make the obvious choice: $t$ as threshold.
\end{remark}
From now on, we will assume that player of our interest is in a \emph{rational game}. Let's examine $L(t,K)$ more closely under this assumption. For $0<t<1,$ we have 
\begin{align*}
L(t,K)&=\int_0^1 G(t,s)dK(t)(s)\\
&=G(t)(s)K(t)(s)\Big|_0^1-\int_0^1 K(t,s)dG(t)(s)\\
&=1-\int_t^1 K(t,s)dG(t)(s)\\
&=1-\int_t^1 K(t,s)se^sds\\
\end{align*}
In particular, we have $L(t,K)\geq 1-F(t,1)$. 
\begin{theorem}\label{thm:general upper bound}
Suppose a player is in a \textbf{rational game} with $n+1$ players, then the optimal threshold for the  player is bounded above by \textbf{rational upper bound} $\gamma_n$, which is defined by
\begin{equation}
\int_{\gamma_n}^1[1-F(t,1)]dt=\big[1-F(\gamma_n,1)\big]^n
\end{equation} 
\end{theorem}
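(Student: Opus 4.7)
The plan is to sandwich the product $M(t,\K)$ between $[1-F(t,1)]^n$ from below and $1-F(t,1)$ from above, then invoke the indifference equation that defines the optimal threshold $A^*$. Since $A^*$ satisfies $\int_{A^*}^1 M(t,\K)\,dt = M(A^*,\K)$, and since (as in the proofs of Theorem~\ref{thm:main} and Proposition~\ref{prop:derivative}) the map $A \mapsto \int_A^1 M(t,\K)\,dt - M(A,\K)$ is decreasing in $A$, to conclude $A^* \leq \gamma_n$ it suffices to prove
\[
\int_{\gamma_n}^1 M(t,\K)\,dt \leq M(\gamma_n,\K).
\]

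I would handle the upper bound $M(t,\K) \leq 1 - F(t,1)$ first, since it is essentially immediate: the last player's rule $K_n = I_{\{s \geq t\}}$ is the point mass at $s=t$, so $L(t,K_n) = G(t,t) = 1 - F(t,1)$, and each of the other factors $L(t,K_i)$ lies in $[0,1]$. For the complementary lower bound $M(t,\K) \geq [1-F(t,1)]^n$, the key claim is $L(t,K_i) \geq 1 - F(t,1)$ for every rational $K_i$. Using the formula $L(t,K_i) = 1 - \int_t^1 K_i(t,s)\,s e^s\,ds$ derived just before the theorem and the obvious bound $K_i(t,s) \leq 1$, one has
\[
\int_t^1 K_i(t,s)\,s e^s\,ds \leq \int_t^1 s e^s\,ds = (1-t)e^t = F(t,1),
\]
so $L(t,K_i) \geq 1 - F(t,1)$, and taking the product over all $n$ other players gives the desired lower bound on $M$.

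Putting the two bounds together with the defining equation for $\gamma_n$,
\[
\int_{\gamma_n}^1 M(t,\K)\,dt \leq \int_{\gamma_n}^1 [1 - F(t,1)]\,dt = [1 - F(\gamma_n,1)]^n \leq M(\gamma_n,\K),
\]
which finishes the argument. The step I expect to warrant the most care is the monotonicity (or at least single sign-change) of $A \mapsto \int_A^1 M(t,\K)\,dt - M(A,\K)$ in the rational-strategy context: the earlier sections establish this for the pure-strategy function $H$, and while the argument transfers to $M$ because $M$ is again a product of quantities in $[0,1]$ that vary smoothly with $A$, it is worth flagging explicitly rather than taking on faith. The sandwich inequalities themselves are robust and do not require any structure on $K_i$ beyond rationality.
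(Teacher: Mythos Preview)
Your proof is correct and uses the same two inequalities as the paper: $L(t,K_i)\geq 1-F(t,1)$ for every rational $K_i$, and $L(t,K_n)=1-F(t,1)$ exactly for the last player. The paper packages these as a single ratio bound
\[
\frac{M(t,\K)}{M(A,\K)}\;\le\;\frac{1-F(t,1)}{[1-F(A,1)]^n},
\]
applies it at the optimal $A$ (where $\int_A^1 M/M(A)\,dt=1$), and then notes that $A\mapsto [1-F(A,1)]^{-n}\int_A^1[1-F(t,1)]\,dt$ is manifestly decreasing; you instead bound numerator and denominator separately and evaluate at $\gamma_n$.

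The one point you flag---monotonicity of $A\mapsto \int_A^1 M(t,\K)\,dt-M(A,\K)$---is indeed not obvious in the rational setting (since $K_i(t,\cdot)$ may depend on $t$ in ways that make $M$ non-monotone), but it is also unnecessary. Your own sandwich, applied at an arbitrary $A>\gamma_n$ rather than only at $\gamma_n$, gives
\[
\int_A^1 M(t,\K)\,dt\;\le\;\int_A^1[1-F(t,1)]\,dt\;<\;[1-F(A,1)]^n\;\le\;M(A,\K),
\]
the strict middle inequality holding because $\int_A^1[1-F(t,1)]\,dt-[1-F(A,1)]^n$ is strictly decreasing in $A$ and vanishes at $\gamma_n$. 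Hence $E'(A)<0$ on $(\gamma_n,1]$, and the maximizer lies in $[0,\gamma_n]$ without any appeal to global monotonicity of $\Phi$. This is exactly what the paper's ratio argument accomplishes.
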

\begin{proof}
Since we have 
\[ 1-F(t,1)\leq L(t)\leq 1
\]
Therefore,
\begin{equation}\label{eq: ineq}
\frac{L(t)}{L(A)}\leq \frac{1}{1-F(A,1)}
\end{equation}
where we've suppressed $K$ in $L$.
Meanwhile,
\[\frac{M(t,\K)}{M(A,\K)}=\prod_{i=1}^{n-1} \frac{L(t,K_i)}{L(A,K_i)}\cdot \frac{G(t,t)}{G(A,A)}\leq \frac{1-F(t,1)}{[1-F(A,1)]^n}
\]
Now by equation~\ref{eq: general NE},we have
\begin{equation}
1=\int_A^1\frac{M(t,\K)}{M(A,\K)}dt \leq \frac{1}{[1-F(A,1)]^n}\int_A^1[1-F(t,1)]dt
\end{equation}
Since $\frac{1}{[1-F(A,1)]^n}\int_A^1[1-F(t,1)]dt$ is decreasing as a function of $A$, we have
\[ A\leq \gamma_n.
\]
\end{proof}

\begin{table}[ht]
	\caption{\emph{Rational Upper Bounds} $\gamma_n$} 
	\centering 
	\begin{tabular}{c c c c c c c c c} 
		\hline\hline
		n & 1&2&3&4&5&6& 7  \\ [0.5ex] 
		\hline 
		
		$\gamma_n$& 0.570557& 0.726417& 0.791326& 0.828415& 0.852904& 0.870488& 0.883829  \\ [0.7ex] 
		\hline 
		n &8&9&10&11&12&13&14  \\ [0.5ex] 
		\hline 
		
		$\gamma_n$& 0.894355&0.902905& 0.910009& 0.916021& 0.921184& 0.925674& 0.929619& \\[0.7ex]
		\hline\hline 
	\end{tabular}
	\label{table:rub} 
\end{table}
\begin{remark}\label{rmk: conterxp}
The \emph{rational upper bounds} $\gamma_n$ is  tight. In fact, the upper bound $\gamma_n$ is obtained when every other player's strategy is defined by
\begin{equation*}
K(t,s)= I_{\{ t\leq \gamma_n,t\leq s\}}+I_{\{t>\gamma_n,s=1\}}
\end{equation*}
where $I$ is indicator function.
\end{remark}
\begin{remark}
This upper bound estimation can help to optimize Model Free Strategy and Reinforcement Learning Strategy, see Appendix.
\end{remark}
\begin{figure}\label{fig:cmp}
	\centering
	\includegraphics[scale=0.27]{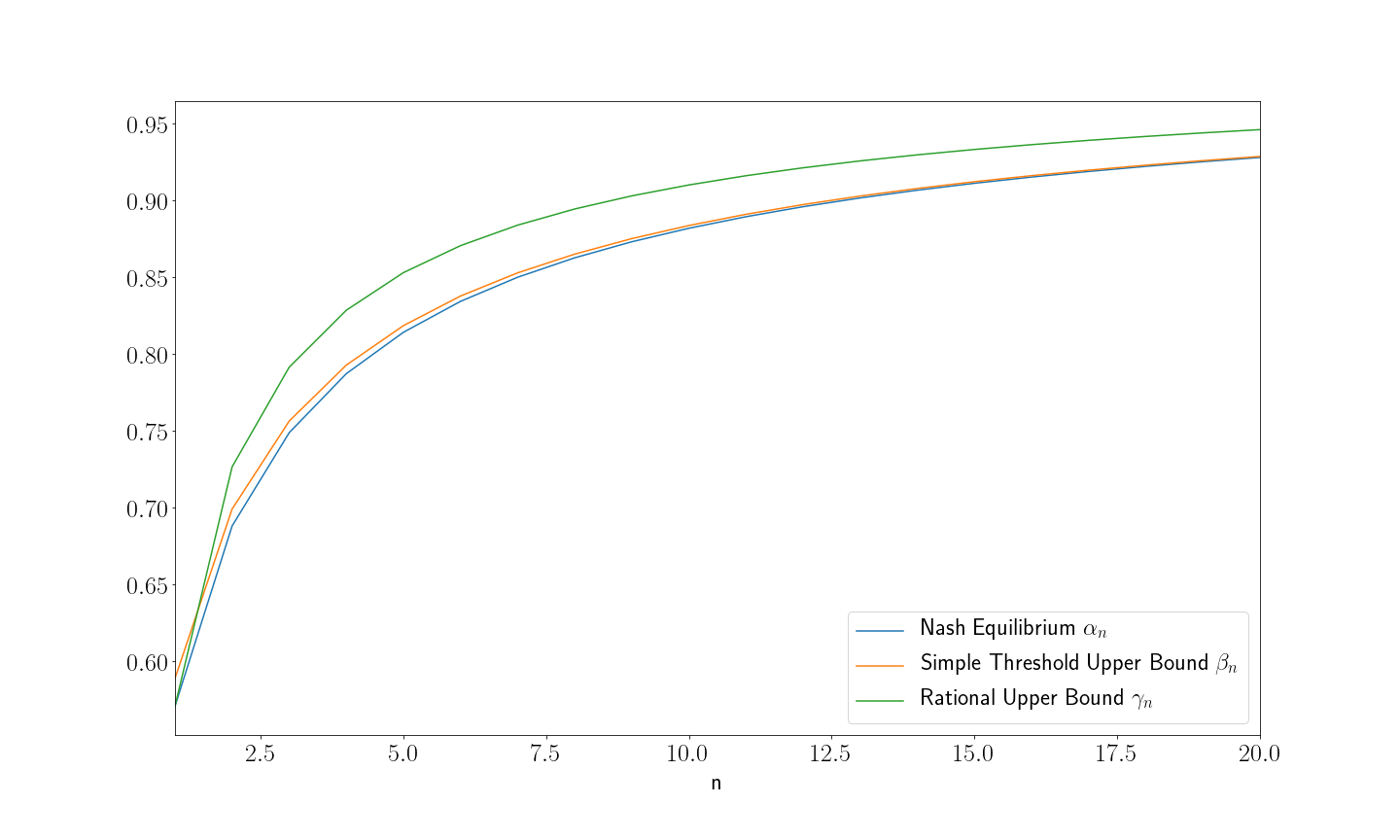}
	\caption{Comparison of $\alpha_n,\beta_n,\gamma_n$.}
\end{figure}
\subsection{Stability}
The strategy in remark~\ref{rmk: conterxp} is an example where player's response changes violently  according to the constraint $t$: when $t$ is below  level $\gamma_n$, the player choose threshold $t$; once $t$ exceeds $\gamma_n$, player's choice is suicidal. This behavior will drive up the level of the best response for the first player who want to exploit the situation by push other players over the cliff. This also explains the huge gap between the \emph{rational upper bound} $\gamma_n$ and the Nash Equilibrium $\alpha_n$, see Figure~\ref{fig:cmp}.

To exclude such behavior, we need to impose stability conditions on the strategies in our consideration as follows:
\begin{itemize}
	\item $K(t,s)$ is decreasing in $t$.
	\item The expected threshold of response
	\begin{equation}
	R(t,K):=\int_0^1sdK(t,s)=1-\int_0^1K(t,s)ds=1-\int_t^1K(t,s)ds
	\end{equation}
	is Lipschitz in $t$ under $L_1$ norm, i.e,
	\begin{equation}
	||R(t_1,K)-R(t_2,K)||_{L_2}\leq c(K)|t_1-t_2|
	\end{equation}
\end{itemize}

\textbf{
The first condition says that players
	won't be unnecessarily ``aggressive": faced with smaller constraint $t$, it is more likely for the player to chose smaller thresholds; The second condition says that the strategy's response will not change violently.
}

Let's see what these two conditions entail.
We will suppress $K$ whenever it causes no confusion. For $t_1<t_2$, we have
\[
L(t_2)-L(t_1)=\int_0^1\big[K(t_1,s)-K(t_2,s)\big]se^sds\geq 0\]
$L(t)$ is increasing, no surprise.

 Meanwhile, we have
\begin{align*}
L(t_2)-L(t_1)&=\int_0^1\big[K(t_1,s)-K(t_2,s)\big]se^sds\\ 
&\leq e\int_0^1\big[K(t_1,s)-K(t_2,s)\big]ds\\
&\leq e \cdot ||R(t_2)-R(t_1)||_{L_1}\\
&\leq ec(K)\cdot|t_2-t_1|
\end{align*}
$L(t)$ is also Lipschitz with a larger constant(by a factor of $e$)! This makes sense because we imposed stability condition on other players' strategies.

 Now we can see that these conditions are more handy if they are imposed on $L$ instead of $K$.
\begin{defn}[Stability]
 A rational strategy $K$ is $c$-\emph{stable} if
 \begin{itemize}
 	\item $L(t,K)$ is non-decreasing,
 	\item $|L(x,K)-L(y,K)|\leq c|x-y|$.
 \end{itemize}
\end{defn}
\begin{remark}
The constant $c$ regulates how stable the strategy is.
\end{remark}
\begin{remark}
Apparently, 
\emph{stability} is preserved under linear interpolation.
\end{remark}
\begin{defn}[Stable Game]
We say a player is in a $c$-\emph{stable game} if the player plays first and the other players' strategies are determined by $\K=(K_1,K_2,\cdots,K_n)$ such that
\begin{itemize}
	\item $K_i$ is $c$-\emph{stable}, for $1\leq i\leq n$.
	\item $K_n=I_{\{s\geq t\}}$, where $I$ is indicator function.
\end{itemize}
\end{defn}
As a consequence, equation~\ref{eq: general NE} have a unique solution $A^*$, and the expected payoff $E(A,\bl)$ is increasing for $A\leq A^*$ and decreasing afterwards. 
Therefore, we have the following proposition:
\begin{prop}\label{prop: simple threshold}
	In a $c$-\textbf{stable game} with constraint $t$, suppose $A$ is the optimal threshold without constraint(i.e, solution for equation~\ref{eq: general NE}),
	the optimal strategy is to take the larger of the two as the threshold.
\end{prop}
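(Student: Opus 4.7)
My plan is to interpret the ``constraint $t$'' as the first player's accumulated sum at some intermediate point of their turn, so that any allowed continuation threshold $\tau$ must satisfy $\tau \geq t$. The proposition then becomes a dynamic-programming optimality statement, and I expect the proof to reduce to the monotonicity facts already stated just before it.

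First I would derive the continuation payoff. By the translation invariance of uniform sums (the first lemma applied to the fresh samples after the player has reached sum $t$), the final score, conditional on not busting, has density $e^{\tau - t}$ on $[\tau, 1]$ when the player continues with threshold $\tau \geq t$. Arguing exactly as for equation~\ref{eq: general payoff},
\begin{equation*}
E_t(\tau) \;=\; e^{\tau - t}\int_\tau^1 M(s,\bl)\,ds \;=\; e^{-t}\, E(\tau, \bl),
\end{equation*}
while stopping immediately at sum $t$ yields $M(t,\bl)$, since by equation~\ref{eq: general NE}'s interpretation that is the probability every other player scores $\leq t$ given constraint $t$.

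Second I would invoke monotonicity. The factor $e^{-t}$ is constant in $\tau$, so maximizing $E_t(\tau)$ over $\tau\geq t$ is the same as maximizing $E(\tau,\bl)$ over $\tau\geq t$. By the remark immediately preceding the proposition, $E(\cdot,\bl)$ is increasing on $[0,A]$ and decreasing on $[A,1]$ with unique maximum $A$, so the constrained maximum sits at $\tau=A$ when $t\leq A$ and at the boundary $\tau=t$ when $t>A$, i.e.\ at $\tau=\max(A,t)$ in either case.

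Third I would rule out or confirm the ``stop'' option in each regime. Set $f(\tau):=M(\tau,\bl)-\int_\tau^1 M(s,\bl)\,ds$; since each $K_i$ is $c$-stable, each $L(\cdot,K_i)$ is non-decreasing, so $M(\cdot,\bl)$ is non-decreasing and $f$ is strictly increasing with unique zero $A$. When $t\leq A$, the continuation payoff at $\tau=A$ equals $e^{A-t}M(A)\geq M(A)\geq M(t)$, so continuing dominates stopping. When $t>A$, $f(t)>0$ gives $M(t)>\int_t^1 M(s)\,ds=E_t(t)$, so stopping dominates any continuation, matching the prescription $\max(A,t)=t$ under the convention that a threshold equal to the current sum means stop. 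The main obstacle is purely notational---fixing that convention for the boundary case $\tau=t$---since once the factorization $E_t(\tau)=e^{-t}E(\tau,\bl)$ is in hand, the claim is an immediate consequence of the unique-maximum property of $E(\cdot,\bl)$ and the strict monotonicity of $f$, both already established.
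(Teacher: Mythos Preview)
Your argument is correct and hinges on the same fact the paper invokes—the unimodality of $E(\cdot,\bl)$ noted in the sentence immediately preceding the proposition—which the paper treats as the entire proof. Your derivation of the factorization $E_t(\tau)=e^{-t}E(\tau,\bl)$ and the separate stop/continue check are welcome elaborations rather than a different route; one minor wording slip: the density $e^{\tau-t}$ on $[\tau,1]$ is the \emph{unconditional} law of the final score (the bust probability is the missing mass), not the conditional one, though your formula for $E_t(\tau)$ is unaffected.
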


\begin{theorem}
Suppose a player is in a $c$-\textbf{stable game}  with $n+1$ players, then the optimal response of the first player is bounded above by \textbf{stable upper bound} $\theta_n(c)$, which is defined by
\begin{equation}
\int_{\theta_n}^1\min\big(1,F(\theta_n,1)+c(t-\theta_n)\big)^{n-1}[1-F(t,1)]dt=[1-F(\theta_n,1)]^n
\end{equation}
\end{theorem}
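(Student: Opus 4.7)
The plan is to follow the template of the proof of Theorem~\ref{thm:general upper bound} (rational upper bound): divide the optimality equation~\ref{eq: general NE} by $M(A,\K)$ and bound the integrand pointwise. The novelty is that we must sharpen the crude bound $L(t,K_i)\leq 1$ used there by exploiting the Lipschitz half of $c$-stability. With $A=A(\K)$ and using the trivial last player's contribution $L(t,K_n)=G(t,t)=1-F(t,1)$, equation~\ref{eq: general NE} rewrites as
\[
1=\int_A^1\frac{M(t,\K)}{M(A,\K)}\,dt=\int_A^1\frac{1-F(t,1)}{1-F(A,1)}\prod_{i=1}^{n-1}\frac{L(t,K_i)}{L(A,K_i)}\,dt.
\]

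The key step is a $K_i$-independent upper bound on each ratio. Fix $i\in\{1,\dots,n-1\}$ and put $\ell_i:=L(A,K_i)$. For $t\geq A$, the Lipschitz bound $L(t,K_i)\leq \ell_i+c(t-A)$ together with $L(t,K_i)\leq 1$ yields $L(t,K_i)\leq \min(1,\ell_i+c(t-A))$, so
\[
\frac{L(t,K_i)}{L(A,K_i)}\leq \frac{\min(1,\ell_i+c(t-A))}{\ell_i}.
\]
A short check, separately on the branches $\ell_i+c(t-A)\leq 1$ and $\ell_i+c(t-A)>1$, shows that the map $\ell\mapsto \min(1,\ell+z)/\ell$ is strictly decreasing on $(0,1]$ for every fixed $z\geq 0$. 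Combining this with the rationality lower bound $\ell_i\geq 1-F(A,1)$ gives the clean, $K_i$-independent estimate
\[
\frac{L(t,K_i)}{L(A,K_i)}\leq \frac{\min\bigl(1,\,1-F(A,1)+c(t-A)\bigr)}{1-F(A,1)}.
\]
Taking the product over $i=1,\dots,n-1$ and moving denominators across yields the master inequality
\[
[1-F(A,1)]^n\leq \int_A^1 \min\bigl(1,\,1-F(A,1)+c(t-A)\bigr)^{n-1}[1-F(t,1)]\,dt,
\]
which, evaluated at $A=\theta_n$, is precisely the defining relation for $\theta_n$ in the statement.

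To upgrade the master inequality into $A\leq \theta_n$, define
\[
\Psi(x):=\int_x^1\min\bigl(1,\,1-F(x,1)+c(t-x)\bigr)^{n-1}[1-F(t,1)]\,dt-[1-F(x,1)]^n,
\]
so that $\Psi(A)\geq 0$ by the master inequality and $\Psi(\theta_n)=0$ by definition. It then suffices to show $\Psi$ is strictly decreasing on $[0,1]$. The subtracted term $[1-F(x,1)]^n$ is strictly increasing since $F(x,1)=(1-x)e^x$ has derivative $-xe^x<0$ on $(0,1)$. For the integral, one splits $[x,1]$ at the transition point $t^\ast(x)$ where $1-F(x,1)+c(t-x)=1$: on $[t^\ast,1]$ the integrand reduces to $[1-F(t,1)]$, independent of $x$, so only the moving lower endpoint contributes and it contributes negatively; on $[x,t^\ast]$ one differentiates under the integral sign and uses $\frac{d}{dx}(1-F(x,1))=xe^x\geq 0$ together with the $-c$ coming from $c(t-x)$ to control the remaining pieces. \textbf{Tracking the non-smooth kink at $t^\ast(x)$ and verifying that the net derivative is negative is the main technical obstacle}; modulo that monotonicity check, the rest is a verbatim adaptation of the proof of Theorem~\ref{thm:general upper bound}.
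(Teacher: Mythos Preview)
Your argument is essentially identical to the paper's: bound $L(t)/L(A)$ by $\min\bigl(1/L(A),\,1+c(t-A)/L(A)\bigr)$ and then replace $L(A)$ by the smaller quantity $1-F(A,1)$, after which the conclusion is declared to follow ``exactly as in Theorem~\ref{thm:general upper bound}.'' The monotonicity of $\Psi$ that you flag as the ``main technical obstacle'' is not verified in the paper either---it is simply inherited by analogy---so your proof is in fact slightly more careful than the original; note also that your master inequality has $1-F(A,1)$ inside the $\min$, which matches the paper's proof but differs from the displayed defining equation for $\theta_n$ (the latter appears to be a typo in the statement).
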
 
\begin{proof}
Notice that\[
L(t)\leq \min(1,L(A)+c(t-A))
\]
Therefore,
\begin{align*}
\frac{L(t)}{L(A)}&\leq \min\Big(\frac{1}{L(A)},1+\frac{c(t-A)}{L(A)}\Big)\\
&\leq \min\Big(\frac{1}{1-F(A,1)},1+\frac{c(t-A)}{1-F(A,1)}\Big)
\end{align*}
The conclusion follows exactly the same argument in theorem~\ref{thm:general upper bound}.
\end{proof}
\subsection{Adaptive Threshold}
In this section we'll focus on an explicit subclass of adaptive strategies. We call a strategy $K$ is an \emph{adaptive threshold strategy} if 
\begin{equation}
K(t,s)=I_{\{s\geq\max(t,a)\}} \text{ for some } a.
\end{equation}
That is, the strategy will take the larger of $t$ and $a$ as threshold. This type of strategy corresponds the players who understand the structure of the Nash Equilibrium strategy but not knowing the exact values. Therefore, this type of strategy behaves less ``agressive" than Nash Equilibrium strategy: A samller $a$ does not make a difference while a larger $a$ makes life easier for the first player. Indeed, we have the following statement:
\begin{theorem}
Suppose $n+1$ players are in the game and everyone is playing \textbf{adaptive threshold strategy} except for possibly the first player. Then the best response of the first player is bounded above by the \textbf{Nash Equilibrium} $\alpha_n$.
\end{theorem}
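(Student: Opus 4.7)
The plan is to reduce the first-player optimality equation~(\ref{eq: general NE}) to the defining equation~(\ref{eq:def NE}) for $\alpha_n$ via a simple pointwise multiplicative bound on $M(t,\K)$.

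I would first make the structure of $L(t,K_i)$ explicit when $K_i$ is an adaptive threshold strategy with parameter $a_i$. In that case $K_i(t,\cdot)$ is a Dirac mass at $\max(t,a_i)$, so $L(t,K_i)=G(t,\max(t,a_i))$, and unpacking the two-case formula for $G$ yields, with the shorthand $c(s):=1-F(s,1)$ and $c_{a_i}:=1-F(a_i,1)$,
\[
L(t,K_i) = c(t)\vee c_{a_i}.
\]
Hence $M(t,\K)=\prod_{i=1}^n\bigl(c(t)\vee c_{a_i}\bigr)$ is non-decreasing in $t$, so the function $\phi(A):=\int_A^1 M(t,\K)\,dt-M(A,\K)$ is strictly decreasing (its derivative is bounded above by $-M(A,\K)<0$). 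In particular (\ref{eq: general NE}) has a unique root $A^*$, and $A^*\leq\alpha_n$ is equivalent to $\phi(\alpha_n)\leq 0$.

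The core of the proof is the pointwise inequality, valid for every $t\geq\alpha_n$ and every $a_i\in[0,1]$,
\[
\frac{c(t)\vee c_{a_i}}{c(\alpha_n)\vee c_{a_i}} \leq \frac{c(t)}{c(\alpha_n)}.
\]
I would verify this by a short case split on the position of $a_i$: if $a_i\leq\alpha_n$ both sides equal $c(t)/c(\alpha_n)$; if $a_i>\alpha_n$ and $t<a_i$ the left side equals $1$, which is $\leq c(t)/c(\alpha_n)$ since $c$ is increasing; and if $a_i>\alpha_n$ with $t\geq a_i$, the claim reduces to $c_{a_i}\geq c(\alpha_n)$, i.e.\ $a_i\geq\alpha_n$. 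Multiplying this inequality over $i$ and integrating over $[\alpha_n,1]$ gives
\[
\int_{\alpha_n}^1 \frac{M(t,\K)}{M(\alpha_n,\K)}\,dt \leq \frac{1}{c(\alpha_n)^n}\int_{\alpha_n}^1 c(t)^n\,dt = 1,
\]
where the last equality is exactly~(\ref{eq:def NE}). This is $\phi(\alpha_n)\leq 0$, and monotonicity of $\phi$ delivers $A^*\leq\alpha_n$.

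The main obstacle is really just pinpointing the right pointwise inequality; everything else is a one-line consequence. The key conceptual point is that the adaptive-threshold form collapses each $L(t,K_i)$ into the maximum of $c(t)$ and a fixed constant, and comparing $M$ at $t$ against $M$ at $\alpha_n$ precisely peels off those constants, leaving behind $(c(t)/c(\alpha_n))^n$, which is perfectly tuned to the defining equation of $\alpha_n$.
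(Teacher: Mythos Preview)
Your proof is correct and follows essentially the same route as the paper: both rewrite $L(t,K_i)=1-F(\max(t,a_i),1)$ and exploit the pointwise ratio bound $\dfrac{c(\max(t,a_i))}{c(\max(A,a_i))}\le\dfrac{c(t)}{c(A)}$ for $t\ge A$ to compare the first-player optimality equation with the defining equation~(\ref{eq:def NE}) for $\alpha_n$. The only cosmetic difference is that the paper applies this inequality at the unknown optimum $A^*$ and then invokes monotonicity of the $\alpha_n$-defining function, whereas you apply it at $\alpha_n$ and invoke monotonicity of $\phi$; the paper additionally extracts the lower bound $A^*\ge\alpha_k$ (with $k=\#\{i:a_i\le A^*\}$) from the reverse inequality, which you do not need for the stated theorem.
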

\begin{proof}
Assume apart from the first player,
the other players' strategies are \[K_i:=I_{\{s\geq\max(t,a_i)\}},\]
 then by \refeq{eq: general NE}, the optimal threshold $A$ for the first player satisfies
\begin{equation}
\int_A^1\prod_{i=1}^n \big[1-F(\max(t,a_i),1)\big]dt=\prod_{i=1}^n \big[1-F(\max(A,a_i),1)\big].
\end{equation}
That is,
\begin{align*}
1&=\int_A^1\prod_{i=1}^n \frac{1-F(\max(t,a_i),1)}{1-F(\max(A,a_i),1)}dt\\
&=\int_A^1\prod_{\{a_i\leq A\}}\frac{1-F(\max(t,a_i),1)}{1-F(\max(A,a_i),1)}\cdot 
\prod_{\{a_i > A\}}\frac{1-F(\max(t,a_i),1)}{1-F(\max(A,a_i),1)} dt\\
&=\int_A^1  \bigg[\frac{1-F(t,1) }{ 1-F(A,1) }\bigg]^k \cdot \prod_{a_i > A}\frac{1-F(\max(t,a_i),1)}{1-F(a_i,1)} dt
\end{align*}
We notice that for $ a_i > A $,
\[ 1 \leq \frac{1-F(\max(t,a_i),1)}{1-F(a_i,1)} \leq \frac{1-F(t,1)}{1-F(A,1)} \]
and the above equation yields:
\[ \int_A^1 \bigg[\frac{1-F(t,1) }{ 1-F(A,1) }\bigg]^k dt
\leq \int_A^1\prod_{i=1}^n \frac{1-F(\max(t,a_i),1)}{1-F(\max(A,a_i),1)}dt 
\leq \int_A^1 \bigg[\frac{1-F(t,1) }{ 1-F(A,1) }\bigg]^n dt\]
By equation~\ref{eq:def NE}, we have $\alpha_k \leq  A\leq \alpha_n$, since
\[ \int_A^1\prod_{i=1}^n \frac{1-F(\max(t,a_i),1)}{1-F(\max(A,a_i),1)}dt\] is decreasing in $A$.
\end{proof}
\begin{remark}
The above theorem tells us that we should not play thresholds if we know other players are playing \emph{adaptive threshold} strategy. And we should play smaller threshold if at least one of the other players has a threshold greater than $\alpha_n$.

\end{remark}

\section{The Road to Reinforcement Learning Strategy}\label{sec:RL}
Now, let's have a closer look at the equation~\ref{eq: general payoff}:
\[
E(A,\K)=e^A\int_A^1 M(t,\K)dt.\]
Wouldn't it be great if we know the expected payoff? We'll have no problem in making our decision. Well, we couldn't know the expected payoff exactly, we can always estimate it. While estimate of $E(A,\K)$ for all $A$ is infeasible, we can always discretize the function, by the following fact:
\begin{equation}
\frac{\p E}{\p A}=e^A\bigg[\int_A^1 M(t,\K)dt-M(A,\K)\bigg]
\end{equation}
which is,
\[ -e^A\cdot M(A,\K)\leq \frac{\p E}{\p A} \leq e^A\int_A^1 M(t,\K)dt.
\]
Therefore, we have bounds for the derivative:
\[ -e\leq  \frac{\p E}{\p A} \leq e^A(1-A) \leq 1,
\]
i.e,
\[\bigg| \frac{\p E}{\p A} \bigg|\leq e.
\]

Let $\Delta:a=x_1<x_2<x_3\cdots x_n=b$ be a discretization of $[a,b]$, and $||\Delta||:=\max\{|x_i-x_{i-1}|\}$, we define 
\begin{equation}
E^\Delta(A,\K):=\frac{1}{x_{i+1}-x_i}\int_{x_i}^{x_{i+1}}E(t,\K)dt,\quad \forall t\in [x_i,x_{i+1}).
\end{equation}
We have 
\begin{equation}
	|E^\Delta(A,\K)-E(A,\K)|\leq \frac{e}{2}\cdot|x_{i+1}-x_i|\leq \frac{e}{2}||\Delta||.
\end{equation}
The first inequality is a consequence of lemma~\ref{lemma:good one}.
\begin{remark}
The accuracy is at the order of $O(1/m)$ if we take $||\Delta||=1/m$.
\end{remark} 
\subsection{Contextual Bandits}
In view of such analysis, we can just discretize the policy space $[0,1]$ into $m$ buckets, we don't know the expected payoff of each bucket, but we can estimate them. This is the $m$-armed bandit problem. There are lots of literatures on this topic. For example, chapter 2 of \cite{Sutton1998} gives an excellent introduction to this problem as well as its solutions. The following content will be dedicated to solve  the engineering challenges specific to this task:

\begin{itemize}
	\item For each permutation we need to assign an $m$-armed bandit, the total number is astronomical.
	\item For each $m$-armed bandit, if $m$ is small, the performance is capped by the large step size; if $m$ is large, then we need to do lots of exploration before we get reasonably good estimate of the reward for each arm.
	\item For $\epsilon$-greedy algorithm, the performance is also capped by the exploration rate $\epsilon$.
\end{itemize}

\subsubsection{Solution to the first issue}
Obviously it's too much for us to assign armed bandit for every permutation. If we make the assumption that the armed bandit only depends on who's \textbf{still} in the game, but not on their order, then we reduce the number of $m$-armed bandit to the order of $2^n$, which is still enormous for moderate size $n$.

By proposition~\ref{prop: simple threshold}, we know that the threshold we shall have is $\max(t,A)$, where $A$ is the output of our algorithm. If we are in a relatively late position in the game, it is \emph{highly} likely that at least one player before us hasn't gone bust, in which case, the constraint $t$ is usually larger than our policy $A$, therefore \textbf{the accuracy of $A$ has minimal effect on the decision}. Indeed, let's look at the case when everyone is playing Nash Equilibrium. Let $A_k:=\alpha_k+\delta$, then the Second derivative at $\alpha_k$ is (see figure~\ref{fig:de})
\begin{equation}
	\Delta_k= - e^{\alpha_k}\prod_{i=k+1}^{n}[1-F(\alpha_i,1)]\cdot [(1-F(\alpha_k))^k+k(1-F(\alpha_k))^{k-1}].
\end{equation}
Therefore, the derivative at $A_k$ is approximately $\Delta_k\cdot \delta$. That is, the effect a small change near $A_k$ would have proportional impact on the payoff with a factor of $\Delta_k\cdot\delta$. 
\begin{figure}\label{fig:de}
	\centering
	\includegraphics[scale=0.25]{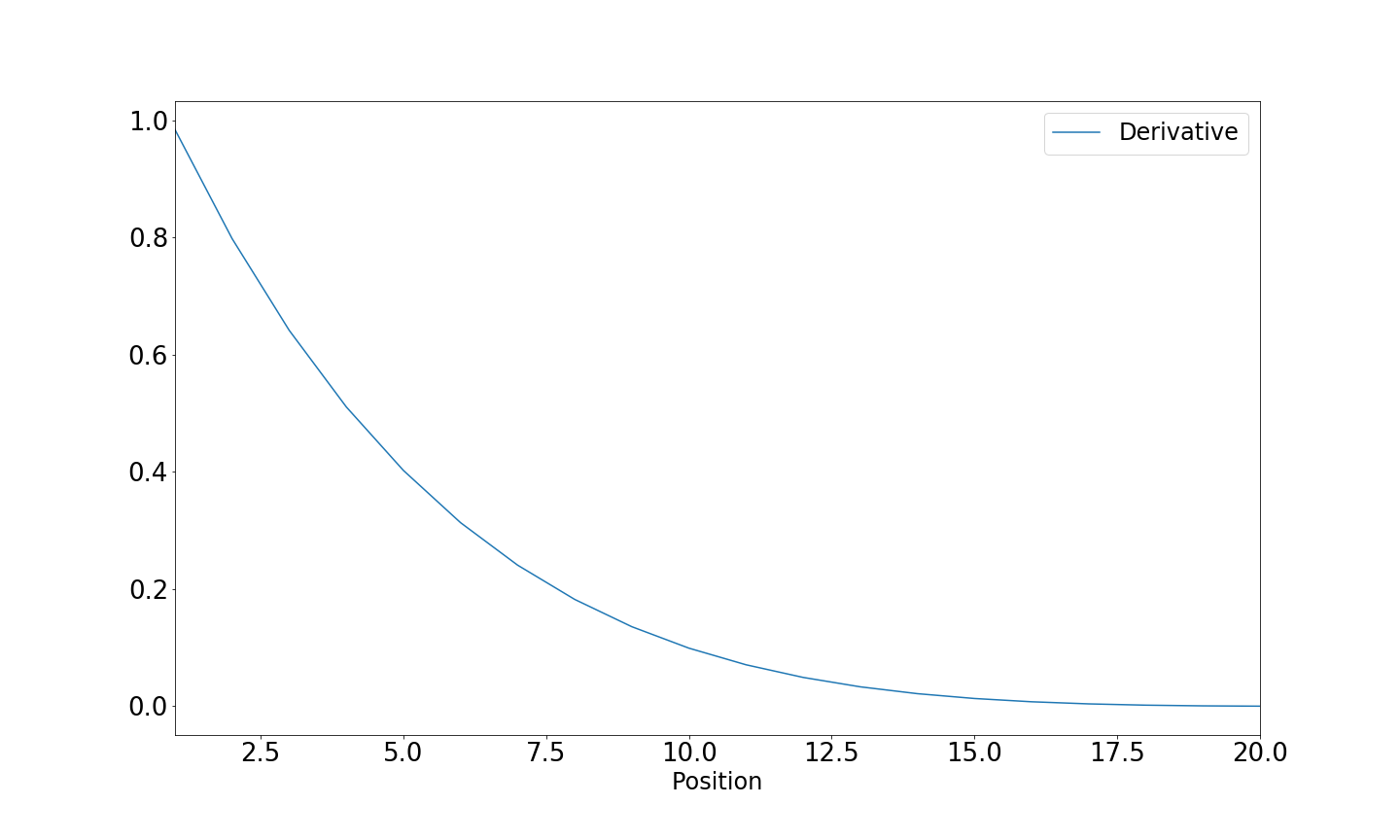}
	\caption{Second Derivative for game with $20$ players.}
\end{figure}
Furthermore, only a small fraction of the rounds played actually contributes to the estimation of rewards of arms with smaller thresholds for late positions. That means the estimations for rewards in later positions have significantly larger error. This also points to the direction that the effort to increase accuracy for late game threshold is not as cost-effective as that for early game.

In view of this analysis, we can reduce the number of $m$-armed bandits considerably. Depending on the resource available to us, we can choose to assign $m$-armed bandits when we are in relatively early position in the game. For example, we can assign $1$, $n$ and $\frac{n(n-1)}{2}$ armed bandits if we are in the first, second, third place in the game, respectively; for each later position, we only assign one $m$-armed bandit. In this case, we only need $O(n^2)$ bandits.
\subsubsection{Solution to second issue}
By proposition~\ref{thm:general upper bound}, we know that not all thresholds are valid candidates. This helps a little bit. What is likely to happen is that lots of arms are barely pulled since they are clearly not good choices. Therefore what we can do is to use \textbf{unequal-step-size discretization}.

\textbf{Policy Pruning}: In order to facilitate the training, we can prune and shrink  the policy space as the game goes: after certain number of rounds, we dump the bottom, say,  $10\%$ performance arms which have no real chance to be the optimal policy. Then we split each of the top  $10\%$ into two arms to gain accuracy. And the total number of arms remains the same. Notice that this process will shrink the ``policy space". In the case where others' strategies are not stationary, we shall not shrink the ``policy space" too much. Usually, we can predetermined how many times we want to use this process. It is also worth mentioning that \textbf{ the further in the game, the smaller difference between the rewards of different arms. To overcome the random fluctuations, it's advisable to take longer time to perform the next policy pruning process. To rank the expected rewards of two Bernoulli process with difference $\delta$ at certain confidence level, it requires $O(\frac{1}{\delta^2})$ trials.}
\subsubsection{Solution to the third issue}
If we do not change the size of discretization, we can simply reduce the exploration rate every given rounds. If we shrink the policy space, we don't have to reduce the exploration rate $\epsilon$ till the process terminates. One way to avoid $\epsilon$ altogether is to use \textbf{gradient bandit algorithm} or \textbf{UCB algorithm}.

\section{Appendix}
\subsection{Bench Mark}
We'll need a good metric to measure the performance of different algorithms, preferiably one does not depend on the number of players in a game. We propose two different metrics:
\subsubsection{Scale Metric}
For a game with $n$ players, the \emph{scale metric} $s$ for $K$ is defined as 
\[ s(K):=nR(K)
\] 
where $R$ is the reward. The player get average score will have $s=1$. This metric can be applied anywhere, the downside is the performance is highly dependent on the opponents. Therefore, the metric can not be interprated out of context. 
\subsubsection{Reference Metric}For a game with $n$ players, the \emph{reference metric} $r$ for $K$ is defined as
\[ r(K):=\frac{R(K)-R(K_0)}{R(K_{ne})-R(K_0)}
\]
Where $K_{ne}$ is the Nash Equilibrium strategy, and $K_0$ is a reference strategy. For example, 
\[ K_0(t,s)=\max\Big(0,\frac{s-t}{1-t}\Big).
\]
This reference is quite useful for the exploration-exploitation type algorithms, because the exploration part is exactly this $K_0$. Suppose the exploration rate is $\epsilon$ , the exploitation part is denoted by $K_{T}$( since it's time dependent) and the average payoff of $K_{T}$ up to time $T$ is denoted by $\bar{R}_T$, then
\[ r(K_T)=(1-\epsilon)\cdot\frac{\bar{R}_T-R(K_0)}{R_{ne}-R(K_0)}.
\]
Therefore, if $r(K_T)\geq 1-\epsilon$, then this strategy can outperform the Nash Equilibrium over the long run as we decrease the exploration rate over time. The disadvantege of such metric is the requirement of the presence of both $K_{ne}$ and $K_0$ in the game.

\subsection{Algorithm}
\begin{algorithm}
	\caption{Model Free Strategy}\label{alg:update profile}
	\begin{algorithmic}
		\Procedure{ModelfreeStrategy} {$N,n$}\Comment{This game has $N$ rounds with $n$ players}
		\For{$i:=2\to n$} \Comment{Initialize profiles for player $2\cdots,n$.}
		\For{$j:=0 \to n$}\Comment{Initialize profiles for each position.}
		\State Initialize $\fp_i[j](t)\gets 0,\,t\in[0,1]$. \Comment{$\fp_i[j](t)$ is a function.}
		\EndFor
		\EndFor
		\State Initialize $r\gets1$ \Comment{$r$ is the number of rounds played so far.}
		\While {$r\leq N$}
		\State {PList $\gets$ reshuffle($1,2,\cdots,n$)}\Comment{ PList is a permutation of $1,2,\cdots,n$.}
		\State Initialize $t\gets0$
		\For{$i:=1 \to n$} \Comment{ Players play game in order.}
		\State  $p\gets$ PList[$i$]\Comment{$p$ is players' Id.}
		\If{ $p=1$}\Comment{We are player $1$ }
		\State Initialize $M\gets1$ \Comment{$M$ is the constant function of $1$.}
		\For{$j:=i+1 \to n$}
		\State $k\gets$ PList[$j$]\Comment{Player $k$ is in position $j$.}
		\State $M\gets M*\fp_k[j](t)$
		\EndFor
		\State $A\gets \arg\max\limits_{ s\geq t} e^s\int_s^1 M(s)ds$ \Comment{$A$ is the optimal threshold.}
		\State  $w \gets$ Play game with threshold $A$ \Comment{$w$ is the score.}
		\Else
		\State $w\gets$ Player $p$ Play Game \Comment{Player $p$ play game.}
		\State $R\gets 0$\Comment{Initialize reward.}
		\If{$w< t$}: 
		\State $R\gets 1$ \Comment{Update reward.}
		\EndIf
		\State
		$\fp_p[i](t)\gets\fp_p[i](t)+\alpha(r)(R-\fp_p[i](t))$\Comment{$\alpha(r)$ is step weight.}
		\EndIf
		\State $t=\max(t,w)$ \Comment{Update the maximum of scores.}
		\EndFor
		\State $r\gets r+1$
		\EndWhile 
		\State \textbf{return}
		\EndProcedure
		\end{algorithmic}
\end{algorithm}

\begin{algorithm}
	\caption{Reinforcement Learning Strategy}
	\begin{algorithmic}
		\Procedure{$\epsilon$-Greedy Algorithm} {$n,\epsilon,T$ } \Comment{$T$ is timetable for Policy pruning.}
		\State Initialize $r_{i,j}\gets1,N_i\gets 10$\Comment{High initial condition encourages exploration.}
		\Loop 
		\State $ R,(i,j)\gets$ Play game with exploration rate $\epsilon$. \Comment{$R$: reward.}
		
		\State $N_i\gets N_i+1$.
		\State $r_{i,j}\gets r_{i,j}+\frac{1}{N_i}(R-r_{i,j})$
		\State $r_i,N_i\gets T(N_i)$\Comment{Policy pruning according to timetable $T$.}
	
		\EndLoop
		\EndProcedure
	\end{algorithmic}
\end{algorithm}
 
\bibliographystyle{alpha}
\bibliography{ref}
\end{document}